\newtheorem{theorem}{Theorem}
\newtheorem{lemma}[theorem]{Lemma}
\newtheorem{proposition}[theorem]{Proposition}
\newtheorem{corollary}[theorem]{Corollary}
\renewcommand{\qed}{$\Box$}
\newenvironment{proof}
{\noindent{\bf Proof.}\ }
{\hfill\qed\par\bigskip}
\begin{document}

\title{Dichotomy for tree-structured trigraph list homomorphism problems}

\author[tomas]{Tom\'as Feder}
\ead{tomas@theory.stanford.edu}

\author[sfu]{Pavol Hell\fnref{nserc}}
\ead{pavol@cs.sfu.ca}

\author[sfu]{David G. Schell}
\ead{dschell@cs.sfu.ca}

\author[liafa]{Juraj Stacho}
\ead{jstacho@liafa.jussieu.fr}

\fntext[nserc]{This research was partially supported by P. Hell's NSERC Discovery Grant.}

\address[tomas]{268 Waverley St., Palo Alto, CA 94301, USA}

\address[sfu]{School of Computing Science, Simon Fraser University\\ 8888
University Drive, Burnaby, B.C., Canada V5A 1S6}

\address[liafa]{LIAFA -- CNRS and Universit\'e Paris Diderot -- Paris VII,\\
Case 7014, 75205 Paris Cedex 13, France }

\begin{abstract}
Trigraph list homomorphism problems (also known as list matrix partition
problems) have generated recent interest, partly because there are concrete
problems that are not known to be polynomial time solvable or $NP$-complete.
Thus while {\em digraph} list homomorphism problems enjoy {\em dichotomy} 
(each problem is $NP$-complete or polynomial time solvable), such dichotomy 
is not necessarily expected for trigraph list homomorphism problems. However,
in this paper, we identify a large class of trigraphs for which list homomorphism 
problems do exhibit a dichotomy. They consist of trigraphs with a tree-like 
structure, and, in particular, include all trigraphs whose underlying graphs 
are trees. In fact, we show that for these tree-like trigraphs, the trigraph list 
homomorphism problem is polynomially equivalent to a related {\em digraph} list
homomorphism problem.  We also describe a few examples illustrating that our
conditions defining tree-like trigraphs are not unnatural, as relaxing them may 
lead to harder problems.
\end{abstract}

\begin{keyword}
trigraph\sep list homomorphism\sep matrix partition\sep trigraph
homomorphism\sep surjective list homomorphism\sep dichotomy \sep trigraph tree
\end{keyword}
\maketitle

\section{Introduction}
A {\em trigraph} $H$ consists of a set $V=V(H)$ of vertices, and two  disjoint
sets of directed edges on $V$ -- the set of {\em weak edges} $W(H)  \subseteq V
\times V$, and the set of {\em strong edges} $S(H) \subseteq V \times  V$. If
both edge sets $W(H), S(H)$, viewed as relations on $V$, are  symmetric, we 
have a {\em symmetric}, or {\em undirected} trigraph. A weak, respectively  strong,
edge $vv$ is called a {\em weak}, respectively {\em strong, loop} at $v$.

The {\em adjacency matrix} of a trigraph $H$, with respect to an  enumeration
$v_1, v_2, \dots, v_n$ of its vertices, is the $n \times n$ matrix $M $ over $0,
1, *$, in which $M_{i,j} = 0$ if $v_iv_j$ is not an edge, $M_{i,j} = * $ if
$v_iv_j$ is a weak edge, and $M_{i,j} = 1$ if $v_iv_j$ is a strong  edge. Note
that a trigraph $H$ is symmetric if and only if its adjacency matrix is symmetric.

We consider the class of digraphs included in the class of trigraphs, by viewing
each digraph $H$ as a trigraph with the same vertex set $V(H)$, and with the 
weak edge set $W(H)=E(H)$ and strong edge set $S(H)= \emptyset$. Conversely, 
if $H$ is a trigraph, the {\em associated digraph} of $H$  is the digraph with the 
same vertex set $V(H)$, and with the edge set $E(H) = W(H) \cup S(H)$. 
Moreover, the {\em underlying graph} of the trigraph $H$ is the underlying 
graph of the associated digraph, and the {\em symmetric graph} of the trigraph
$H$ is the symmetric graph of the associated digraph. To be specific, $xy$ is
an edge of the {\em underlying graph} of $H$ just if $xy \in W(H) \cup S(H)$ or
$yx \in W(H) \cup S(H)$, and $xy$ is an edge of the {\em symmetric graph} of 
$H$ just if $xy \in W(H) \cup S(H)$ and $yx \in W(H) \cup S(H)$. These conventions 
allow us to extend the usual graph and digraph terminology to trigraphs. We speak, 
for instance, of {\em adjacent vertices, components, neighbours, cutpoints, or 
bridges} of a trigraph $H$, meaning the corresponding notions in the associated 
digraph of $H$, or in its underlying graph; and we speak of {\em symmetric edges,
symmetric neighbours, etc.} in a trigraph $H$, meaning the edges, neighbours,
etc., in the symmetric graph of $H$.

Let $G$ be a digraph and $H$ a trigraph. A {\em homomorphism} of $G$  to $H$ is
a mapping $f : V(G) \rightarrow V(H)$ such that  the following two conditions
are satisfied for each $u\neq v$:
\begin{enumerate}[--]
\item if $uv \in E(G)$ then $f(u)f(v) \in W(H) \cup S(H)$
\item if $uv \not\in E(G)$ then $f(u)f(v) \not\in S(H)$.
\end{enumerate}

In other words, edges of $G$ must map to either weak or strong edges  of $H$,
and non-edges of $G$ must map to either non-edges or weak edges of $H$.

If each vertex $v$ of the digraph $G$ has a {\em list} $L(v) \subseteq V(H)$, 
then a {\em list homomorphism} of $G$ to $H$, with respect to the lists $L$, 
is a homomorphism $f$ of $G$ to $H$ such that $f(v) \in L(v)$ for all $v \in V(G)$.
Following standard practice~\cite{hombook}, we also call a homomorphism of $G$
to $H$ an {\em $H$-colouring of $G$}, and a list homomorphism of $G$ to $H$
(with respect to the lists $L$) a {\em list $H$-colouring} of $G$ (with respect to $L$).

Suppose $H$ is a fixed trigraph. The {\em $H$-colouring problem} \mbox{HOM}$(H)$
has as instances digraphs $G$, and asks whether or not $G$ admits an
$H$-colouring. The {\em list $H$-colouring problem} \mbox{L-HOM}$(H)$ has as
instances digraphs $G$ with lists $L$, and asks whether or not $G$ admits a list
$H$-colouring with respect to $L$.  As noted earlier, $H$ could be a digraph,
viewed as a trigraph (with $W(H)=E(H)$, $S(H)=\emptyset$). Digraph homomorphism
and list homomorphism problems have been of much interest \cite{hombook}. If
necessary, we will emphasize the distinction between {\em trigraph} list
homomorphism problems and {\em digraph} list homomorphism problems, depending on
whether $H$ is a trigraph or a digraph respectively, i.e., whether $H$ has any
strong edges or not.  However, note that the input $G$ is always~a~digraph.

For a fixed trigraph $H$, the list $H$-colouring problem L-HOM$(H)$ concerns the
existence of vertex partitions of the input digraphs $G$. For instance, if $H$
is the undirected trigraph with $V(H)=\{0,1\}$, with a strong loop at $1$ and a
weak edge joining $0$ and $1$, then an $H$-colouring of $G$ is  precisely a
partition of $V(G)$ into a clique and an independent set. Thus $G$ is
$H$-colourable if and only if $G$ is a split graph. Many graph partition
problems, especially those arising from the theory of perfect graphs, can be
formulated as trigraph homomorphism (or list homomorphism) problems; this is
discussed in detail in \cite{fhkm}. Equivalently, all these problems can be
described in terms of the adjacency matrix of the trigraph, in the language of
matrix partitions and list partitions (see \cite{cehm,fhkm,fhks,kim}).  In this
paper it will be more convenient to emphasize the trigraph (rather than the
matrix) terminology, since we are dealing with the structure of the trigraph
$H$.

It is generally believed \cite{federvardi} that for each digraph $H$ the
$H$-colouring problem HOM$(H)$ is $NP$-complete or polynomial time solvable.
(This is equivalent to the so-called {\em CSP Dichotomy Conjecture} of Feder and
Vardi \cite{federvardi}.) One special case when the dichotomy conjecture is
known to hold is the case of undirected graphs (i.e., symmetric digraphs).  In
this case, HOM$(H)$ is polynomial time solvable if $H$ has a loop or is
bipartite and is $NP$-complete otherwise \cite{homdich}. For the list
homomorphism problem \mbox{L-HOM$(H)$}, it is shown in \cite{fhh} that
L-HOM$(H)$ is polynomial time solvable if $H$ is a so-called {\em bi-arc graph}
(a simultaneous generalization of reflexive interval graphs and bipartite graphs
whose complements are circular arc graphs), and is $NP$-complete otherwise. A
more general result of Bulatov \cite{bulatov} handles all constraint
satisfaction problems, implying, in particular, that for each digraph $H$, the
list $H$-colouring problem L-HOM$(H)$ is $NP$-complete or polynomial time
solvable. By contrast, this is not known for trigraph list homomorphism
problems, and in \cite{fulcsp} it is only proved that for each trigraph $H$, the
list $H$-colouring problem is $NP$-complete or {\em quasi-polynomial} (of
complexity $n^{O(\log^kn)}$).  All list $H$-colouring problems L-HOM$(H)$ for
trigraphs $H$ with three or fewer vertices have been classified as $NP$-complete
or polynomial time solvable in \cite{kim}.  For symmetric trigraphs with four
vertices, this has been accomplished in \cite{cehm}, with the exception of a
single trigraph $H$; the corresponding problem remains open and has earned the
name {\em the stubborn problem} (Figure \ref{fig:ex1}a). The best known algorithm
for this problem has complexity $n^{O(\log n / \log \log n)}$, implying it is
unlikely to be $NP$-complete \cite{fhks}. Thus (polynomial / $NP$-complete)
dichotomy for trigraph list homomorphism problems seems less likely than for
{\em digraph} list homomorphism problems.

In this paper we prove dichotomy for the class of {\em trigraph trees}, i.e.,
for trigraphs whose underlying graph is a tree. It turns out that if $H$ is a
trigraph tree, then the list $H$-colouring problem is polynomially equivalent to
a list $H^-$-colouring problem where $H^-$ is a {\em digraph} obtained from $H$
by removing all vertices with a strong loop and removing all other strong edges
$xy$ (and their converses $yx$ if any).

We conduct the proof of this result in such a way that it in fact implies the
dichotomy for a large class of trigraphs, which includes all digraphs and 
all trigraph trees. We think of these trigraphs as {\em tree-like}, although
it is only the structure of the strong edges that is tree-like. For trigraphs 
that are not tree-like (in our definition), we illustrate the possible 
complications. We believe that our class of tree-like trigraphs covers 
an important portion of the class of trigraphs $H$ in which the strong 
edges do not significantly impact the complexity of the list $H$-colouring 
problem.

\section{Tools}

Let $H$ be a trigraph, and let $G$ be a digraph with list $L(u)\subseteq V(H)$
for each $u\in V(G)$.  We shall denote by $n$ the number of vertices in $G$, 
and by $k$ the number of vertices in $H$. We say that lists $L'$ are {\em a
reduction of $L$}, if we have $L'(u)\subseteq L(u)$ for each $u\in V(G)$.

In the following text, we shall say that lists $L$ can be {\em reduced} to
satisfy property $P$ to mean that for every instance consisting of a graph 
$G$ with lists $L$, there exist lists $L'$ on $G$
such that 

\begin{enumerate}[(i)]
\item
the lists $L'$ can be found in polynomial time
\item
the lists $L'$ are a reduction of $L$
\item
the lists $L'$ satisfy property $P$, and 
\item
$G$ has a list $H$-colouring with respect to $L$ if and only if $G$
has a list $H$-colouring with respect to $L'$.
\end{enumerate}

We shall say that lists $L$ can be {\em transformed} to satisfy property $P$
to mean that for every instance $G$ with lists $L$, there exists a set $\mathcal
L=\{L_i\}_{i\in I}$ of lists such that 

\begin{enumerate}[(i)]
\item
the set $\mathcal L$ can be constructed in polynomial time
\item
each $L_i$ is a reduction of $L$
\item
each $L_i$ satisfies property $P$, and 
\item
$G$ has a list $H$-colouring with respect to $L$ if and only if 
there exists an $i \in I$ such that $G$ has a list $H$-colouring with respect 
to $L_i$.
\end{enumerate}

Note that while reducing the lists results in a single problem, transforming the
lists produces a family of problems, as illustrated in Figure \ref{fig:1}.

\begin{figure}[h]
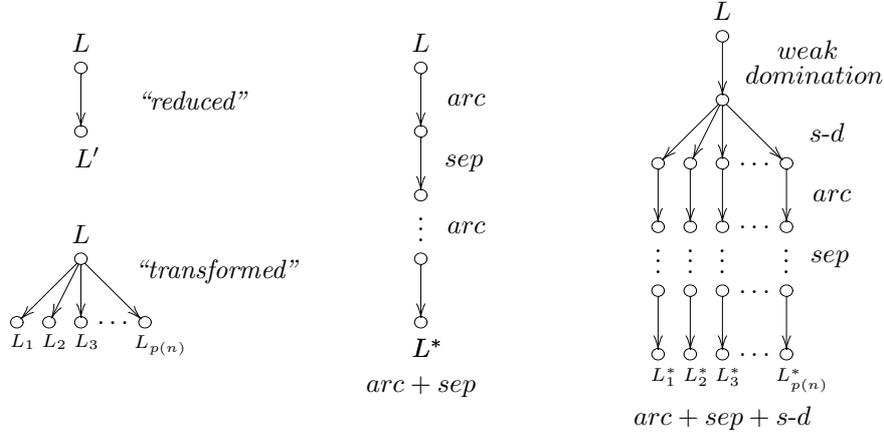

\centering
$
\xy/r2pc/:
(-1,0.5)*[o][F]{\phantom{s}}="a1";
(-1,-0.5)*[o][F]{\phantom{s}}="a2";
{\ar "a1";"a2"};
"a1"+(0,0.4)*{L};
"a2"+(0.1,-0.4)*{L'};
(0,0)*[r]{\mbox{\em ``reduced''}};
(-1,-2.5)*[o][F]{\phantom{s}}="a1";
(-2,-3.5)*[o][F]{\phantom{s}}="a2";
(-1.5,-3.5)*[o][F]{\phantom{s}}="a3";
(-1,-3.5)*[o][F]{\phantom{s}}="a4";
(-0.5,-3.5)*[o]{\ldots}="a5";
(0,-3.5)*[o][F]{\phantom{s}}="a6";
{\ar "a1";"a2"};
{\ar "a1";"a3"};
{\ar "a1";"a4"};
{\ar "a1";"a6"};
"a1"+(0,0.4)*{L};
"a2"+(0.1,-0.3)*{_{L_1}};
"a3"+(0.1,-0.3)*{_{L_2}};
"a4"+(0.1,-0.3)*{_{L_3}};
"a6"+(0.25,-0.35)*{_{L_{p(n)}}};
(0,-2.7)*[r]{\mbox{\em ``transformed''}};
(4.3,0.5)*[o][F]{\phantom{s}}="a1";
(4.3,-0.5)*[o][F]{\phantom{s}}="a2";
(4.3,-1.5)*[o][F]{\phantom{s}}="a3";
(4.3,-1.9)*[o]{\vdots};
(4.3,-2.5)*[o][F]{\phantom{s}}="a4";
(4.3,-3.5)*[o][F]{\phantom{s}}="a5";
{\ar "a1";"a2"};
{\ar "a2";"a3"};
{\ar "a4";"a5"};
"a1"+(0,0.4)*{L};
"a5"+(0.1,-0.4)*{L^*};
(4.8,0)*[r]{arc};
(4.8,-1)*[r]{sep};
(4.8,-2)*[r]{arc};
(4.3,-4.5)*{arc+sep};
(9,1)*[o][F]{\phantom{s}}="a0";
(9,0)*[o][F]{\phantom{s}}="a1";
(8,-1)*[o][F]{\phantom{s}}="a21";
(8.5,-1)*[o][F]{\phantom{s}}="a22";
(9,-1)*[o][F]{\phantom{s}}="a23";
(9.5,-1)*[o]{\ldots}="a24";
(10,-1)*[o][F]{\phantom{s}}="a25";
{\ar "a0";"a1"};
{\ar "a1";"a21"};
{\ar "a1";"a22"};
{\ar "a1";"a23"};
{\ar "a1";"a25"};
(8,-2)*[o][F]{\phantom{s}}="a31";
(8.5,-2)*[o][F]{\phantom{s}}="a32";
(9,-2)*[o][F]{\phantom{s}}="a33";
(9.5,-2)*[o]{\ldots}="a34";
(10,-2)*[o][F]{\phantom{s}}="a35";
{\ar "a21";"a31"};
{\ar "a22";"a32"};
{\ar "a23";"a33"};
{\ar "a25";"a35"};
(8,-2.4)*[o]{\vdots};
(8.5,-2.4)*[o]{\vdots};
(9,-2.4)*[o]{\vdots};
(10,-2.4)*[o]{\vdots};
(8,-3)*[o][F]{\phantom{s}}="a41";
(8.5,-3)*[o][F]{\phantom{s}}="a42";
(9,-3)*[o][F]{\phantom{s}}="a43";
(9.5,-3)*[o]{\ldots}="a44";
(10,-3)*[o][F]{\phantom{s}}="a45";
(8,-4)*[o][F]{\phantom{s}}="a51";
(8.5,-4)*[o][F]{\phantom{s}}="a52";
(9,-4)*[o][F]{\phantom{s}}="a53";
(9.5,-4)*[o]{\ldots}="a54";
(10,-4)*[o][F]{\phantom{s}}="a55";
{\ar "a41";"a51"};
{\ar "a42";"a52"};
{\ar "a43";"a53"};
{\ar "a45";"a55"};
"a0"+(0,0.4)*{L};
"a5"+(0.1,-0.4)*{L^*};
(10,0.8)*[r]{weak};
(9.5,0.4)*[r]{domination};
(10.5,-0.5)*[r]{s\mbox{-}d};
(10.5,-1.5)*[r]{arc};
(10.5,-2.5)*[r]{sep};
"a51"+(0.1,-0.35)*{_{L^*_1}};
"a52"+(0.1,-0.35)*{_{L^*_2}};
"a53"+(0.1,-0.35)*{_{L^*_3}};
"a55"+(0.25,-0.4)*{_{L^*_{p(n)}}};
(9,-5)*{arc+sep+s\mbox{-}d};
\endxy
$
\caption{Illustrating the concepts of ``reduced'' and ``transformed''. The
circles represent lists. The middle and the right figure illustrate Lemma \ref{lem:5}
and \ref{lem:8} respectively. Labels {\em arc, sep,} and {\em s-d} are
shorthands for arc-consistency,  separator-consistency, and
sparse-dense-consistency.\label{fig:1}}
\end{figure}

\medskip

We say that lists $L$ are {\em non-empty} if the list $L(u)$ for each vertex
$u\in V(G)$ is not empty. Clearly, if $G$ admits a list homomorphism to $H$ with
respect to $L$, then the lists $L$ are non-empty. Hence, to avoid trivial cases
in what follows, we shall always assume that lists $L$~are~non-empty.

We say that lists $L$ are {\em arc-consistent} if for each $u,v\in V(G)$ and
each $x\in L(u)$, there exists $y\in L(v)$ such that (i) $xy\in W(H)\cup S(H)$
if $uv\in E(G)$, (ii) $yx\in W(H)\cup S(H)$ if $vu\in E(G)$, (iii) $xy\not\in
S(H)$ if $uv\not\in E(G)$, and (iv) $yx\not\in S(H)$ if $vu\not\in E(G)$.

\begin{lemma}\label{lem:1}
Lists $L$ can be reduced to be arc-consistent.
\end{lemma}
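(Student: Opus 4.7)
The plan is to prove Lemma~\ref{lem:1} by the standard arc-consistency propagation procedure: starting from $L^{(0)}=L$, we repeatedly remove any element that has no compatible partner in some other list, producing a descending chain $L^{(0)}\supseteq L^{(1)}\supseteq\cdots$ that stabilizes at the desired $L'$.

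Concretely, at step $i$ I would search for a triple $(u,v,x)$ with $u,v\in V(G)$ and $x\in L^{(i)}(u)$ such that no $y\in L^{(i)}(v)$ satisfies conditions (i)--(iv) from the definition of arc-consistency, and if such a triple exists, form $L^{(i+1)}$ by deleting $x$ from $L^{(i)}(u)$. Each iteration shrinks the total size of the lists by at least one, so the procedure terminates after at most $nk$ rounds; since each round can be carried out by scanning all pairs of vertices and all list elements, the whole computation runs in polynomial time. The terminal lists $L'$ form a reduction of $L$ and are arc-consistent by the stopping criterion, settling conditions (i)--(iii) of the reduction definition.

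The only point that needs a short argument is condition (iv): that this process never destroys a list $H$-colouring. For this I would show by induction on $i$ that every list $H$-colouring $f$ with respect to $L$ satisfies $f(u)\in L^{(i)}(u)$ for all $u\in V(G)$. If $x=f(u)\in L^{(i)}(u)$, then by the inductive hypothesis $y=f(v)\in L^{(i)}(v)$ for every $v$, and this $y$ witnesses arc-consistency at $(u,v,x)$: each of the four clauses is precisely the homomorphism requirement at the pair $uv$ (or $vu$), with the strong/weak distinction of $H$ matching the edge/non-edge distinction of $G$. Hence no value in the image of $f$ is ever removed, and $f$ remains a valid list $H$-colouring with respect to $L'$; the converse direction is immediate from $L'\subseteq L$. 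I do not anticipate a real obstacle here, as this is the usual folklore arc-consistency argument adapted verbatim to the trigraph setting.
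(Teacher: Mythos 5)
Your proposal is correct and follows essentially the same route as the paper's own proof: iteratively delete any $x\in L(u)$ lacking a witness $y\in L(v)$, observe that no $H$-colouring respecting $L$ can map $u$ to such an $x$ (so solutions are preserved), and note termination within $nk$ deletions, each checkable in polynomial time. Your explicit induction on the rounds just spells out what the paper states in one line, so there is nothing to add.
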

\begin{proof}
If some $u,v\in V(G)$ violate the above condition for $x\in L(u)$, no
$H$-colouring of $G$ with respect to $L$ maps $u$ to $x$. Hence, $x$ can
be removed from $L(u)$ without changing the existence of solutions. We 
repeatedly test for such violations and reduce the lists if a violation is found. 
After at most $n\times k$ such steps either we obtain arc-consistent lists.
(Note that if one final list is empty then all are empty, by arc-consistency.)
\end{proof}

We say that lists $L$ contain {\em representatives}, if there is a set
$X\subseteq V(H)$ such that
\begin{enumerate}[(i)]
\item for each $v\in V(G)$, the list $L(v)\subseteq X$, and
\item for each $x\in X$, there is a vertex $v\in V(G)$ with $L(v)=\{x\}$,
\end{enumerate}

\begin{lemma}\label{lem:2}
Lists $L$ can be transformed to contain representatives.
\end{lemma}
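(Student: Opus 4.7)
The plan is to exploit the fact that $k=|V(H)|$ is a constant (since $H$ is fixed), so we can afford a polynomial-size branching over (a) which subset $X \subseteq V(H)$ is going to be the ``image'' of the colouring, and (b) which vertex of $G$ serves as a representative with singleton list $\{x\}$ for each $x \in X$. Since we are allowed a family of reductions rather than a single one, no real combinatorial work is required; we just decouple the two defining conditions of ``contains representatives'' by first fixing $X$ and then fixing the representatives.

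Concretely, I would take $\mathcal{L}$ to be the collection of all lists $L_{X,\alpha}$ indexed by pairs $(X,\alpha)$, where $X \subseteq V(H)$ and $\alpha : X \to V(G)$ is an injection satisfying $x \in L(\alpha(x))$ for every $x \in X$. Given such a pair, define
\[
L_{X,\alpha}(v) \;=\; \begin{cases} \{x\} & \text{if } v=\alpha(x) \text{ for some } x\in X,\\ L(v)\cap X & \text{otherwise.} \end{cases}
\]
The number of candidate pairs is at most $2^{k}\cdot n^{k}$, hence polynomial in $n$, and each $L_{X,\alpha}$ is clearly computable in polynomial time, verifying condition (i). Condition (ii) (reduction) holds since $\{x\}\subseteq L(\alpha(x))$ by the choice of $\alpha$, and $L(v)\cap X\subseteq L(v)$ otherwise. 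Condition (iii) (representatives) is immediate with witness set $X$: every list is contained in $X$, and every $x\in X$ is the list of $\alpha(x)$.

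The only non-routine step is condition (iv), the equivalence of solutions. One direction is automatic: any list $H$-colouring of $G$ with respect to $L_{X,\alpha}$ is a list $H$-colouring with respect to $L$ because $L_{X,\alpha}(v)\subseteq L(v)$ for every $v$. For the converse, given a list $H$-colouring $f$ of $G$ with respect to $L$, I would set $X := f(V(G))$ and, for each $x\in X$, pick any $v_x\in f^{-1}(x)$, putting $\alpha(x):=v_x$; the $v_x$ are distinct and each $v_x$ satisfies $x=f(v_x)\in L(v_x)$, so the pair $(X,\alpha)$ is among our candidates. Then for $v=\alpha(x)$ we have $f(v)=x\in\{x\}=L_{X,\alpha}(v)$, and for every other $v$ we have $f(v)\in L(v)$ and $f(v)\in X$ (as $f$ maps into $X$), so $f(v)\in L_{X,\alpha}(v)$; hence $f$ is a list $H$-colouring with respect to $L_{X,\alpha}$.

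The main ``obstacle'' is really just recognizing that the definition of \emph{transformed} permits a polynomial-sized family, which lets us brute-force both $X$ and its representatives; once that is observed, the verification is essentially bookkeeping. One can discard any $L_{X,\alpha}$ containing an empty list (these trivially have no solution) so that the output family consists of non-empty instances, matching the running assumption of the paper.
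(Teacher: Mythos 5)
Your proof is correct and follows essentially the same approach as the paper: branch over the image set $X\subseteq V(H)$ and a choice of representative in $G$ for each $x\in X$, restricting all other lists to $X$. Your version is in fact slightly more careful than the paper's terse proof, since you make explicit the injectivity of the assignment $\alpha$ and the requirement $x\in L(\alpha(x))$ needed for the reduction condition, and you spell out the verification of condition (iv).
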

\begin{proof}
We construct the set $\mathcal L=\{L_i\}_{i\in I}$ as follows.  The lists $L_i$
are obtained in performing in every possible way the following process. For each
vertex $x \in V(H)$, we either select a vertex of $G$ and set its list to
$\{x\}$, or we remove $x$ from all lists. The resulting lists $L_i$ contain
representatives for the set $X$ of vertices $x$ for which some list in $G$ was
set to $\{x\}$.  Moreover, the vertices of $H - X$ do not appear in $L_i$.
\end{proof}

The {\em symmetric trigraph $H'$ associated with $H$} is the trigraph on the
vertices of $H$ with strong edges $xy$ iff $xy,yx\in S(H)$, and with weak edges
$xy$ iff $xy\not\in S(H)$ or $yx\not\in S(H)$ but $xy,yx\in W(H)\cup
S(H)$.

A graph $G$ is {\em chordal} if $G$ contains no induced cycle of length four or
longer. A {\em chordal completion} $G'$ of $G$ is a chordal graph on the
vertices of $G$ with $E(G')\supseteq E(G)$. A chordal completion $G'$ is {\em
minimal} if no chordal completion $G''$ of $G$ satisfies $E(G'')\subsetneqq
E(G')$.

\begin{proposition}\cite{tarjan}\label{prop:tarjan}
Let $G'$ be a minimal chordal completion of $G$, and let $C$ be a clique of $G$.
Then there are no edges in $G'$ between vertices of different components of
$G-C$.
\end{proposition}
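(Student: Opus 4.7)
My plan is to exploit the minimality of $G'$ directly. Let $A_1, \ldots, A_t$ denote the connected components of $G-C$, and let $H$ be the spanning subgraph of $G'$ obtained by deleting every edge of $G'$ whose two endpoints lie in distinct components $A_i, A_j$. The aim is to show that $H$ is itself a chordal completion of $G$; since $H$ is a subgraph of $G'$ and $G'$ is a minimal chordal completion, this forces $H = G'$, which is exactly the conclusion. The inclusion $G \subseteq H$ is immediate: the edges of $G'$ we removed have endpoints in different components of $G-C$, and no edge of $G$ can join two such vertices. So the real task is to verify that $H$ is chordal.

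Suppose, toward a contradiction, that $H$ contains an induced cycle $D$ of length at least $4$. Since $D$ is a cycle of the chordal graph $G'$, it has a chord $e \in E(G')$, and this chord cannot lie in $H$ (as $D$ is induced in $H$), so $e = xy$ with $x \in A_i$, $y \in A_j$, $i \ne j$. The vertices $x$ and $y$ split $D$ into two internally disjoint subpaths $P_1, P_2$ from $x$ to $y$, each lying in $H$. The key observation is that every edge of $H$ with both endpoints outside $C$ lies within a single $A_p$: the cross-edges of $G'$ were deleted, and $G$ itself has no edges between distinct components of $G-C$. Consequently, any $x$--$y$ path in $H$ must visit $C$, so each $P_r$ contains a vertex $c_r \in C$. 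The paths $P_1, P_2$ are internally disjoint and $x, y \notin C$, so $c_1 \ne c_2$ and both are interior vertices of $D$. Because $C$ is a clique in $G$, and hence in $H$, the edge $c_1 c_2$ lies in $H$; and since $c_1, c_2$ are non-consecutive on $D$, this is a chord of $D$ in $H$, contradicting that $D$ is induced.

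The main obstacle I expect is really the middle observation: that every path in $H$ leaving one component $A_i$ for a different $A_j$ is forced to pass through $C$. This is built into the definition of $H$, but it is the crucial piece of leverage---without it, the induced cycle $D$ could use fill-in edges to sneak between different $A_p$'s, and no chord would be forthcoming. Once it is in hand, the cliqueness of $C$ produces the required chord $c_1 c_2$ essentially for free, and the rest is bookkeeping about how the chord $xy$ splits $D$ into the two subpaths.
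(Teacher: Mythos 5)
Your proof is correct and complete. The paper itself gives no argument for this proposition --- it is quoted from Tarjan's paper on clique separator decompositions --- but your argument (delete the cross-component fill edges, show the result is still a chordal completion by producing the chord $c_1c_2$ inside the clique $C$, and invoke minimality) is exactly the standard proof of this fact, and every step checks out: in particular the observations that $c_1\ne c_2$, that neither equals $x$ or $y$, and that $c_1c_2$ is non-consecutive on $D$ all follow correctly from the internal disjointness of $P_1$ and $P_2$.
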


We say that lists $L$ contain {\em strong representatives}, if for each strong
loop $x$ of $H$, there is a set $S_x\subseteq V(H)$ such that

\begin{enumerate}[(i)]
\item $L(v)\subseteq S_x$ whenever $x\in L(v)$,
\item each vertex of $S_x\setminus\{x\}$ is a symmetric neighbour of $x$, and
\item $S_x\setminus \{x\}$ is connected in the symmetric graph of~$H$.
\end{enumerate}

\begin{lemma}\label{lem:3}
Lists $L$ can be transformed to contain strong representatives.
\end{lemma}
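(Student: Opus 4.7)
The approach is to enumerate polynomially many restrictions of the input lists, each satisfying strong representatives, so that every solution of $L$ is preserved by some restriction. For each strong loop $x$ of $H$ I enumerate a candidate representative $v_x^*\in V(G)\cup\{\mathrm{null}\}$ and, when $v_x^*$ is not null, a connected subset $C_x$ of $N^{\mathrm{sym}}(x)$ in the symmetric graph of $H$. Since $|V(H)|$ is constant, $C_x$ has only constantly many choices per strong loop, while $v_x^*$ has $n+1$; the product over the constantly many strong loops is polynomial in $n$.

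For each combined choice I construct $L_i$: when $v_x^*=\mathrm{null}$, remove $x$ from every list; otherwise set $L_i(v_x^*)=\{x\}$ and, for each other $v$ with $x\in L(v)$, remove $x$ from $L_i(v)$ if $v$ is not symmetrically adjacent to $v_x^*$ in $G$, else restrict $L_i(v)$ to $L(v)\cap(\{x\}\cup C_x)$. By inspection each $L_i$ has strong representatives with witness $S_x=\{x\}\cup C_x$: condition (i) holds by the restriction, (ii) by $C_x\subseteq N^{\mathrm{sym}}(x)$, and (iii) by the connectedness of $C_x$.

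The correctness argument takes a solution $f$ of $L$ and picks $v_x^*\in f^{-1}(x)$ when this set is non-empty (null otherwise). The removals are safe: if $v$ is not sym-adjacent to $v_x^*$ and $f(v)=x$, then together with $f(v_x^*)=x$ and $xx\in S(H)$, the non-edge between $v$ and $v_x^*$ would violate the homomorphism rule; and for $v$ sym-adjacent to $v_x^*$, the sym-adjacency in $G$ forces $f(v)\in N^{\mathrm{sym}}(x)\cup\{x\}$, so the restriction preserves $f(v)$ provided $C_x$ is chosen accordingly.

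The main technical obstacle is to ensure that a suitable connected $C_x$ exists, i.e.\ that the images $\{f(v):v\in V_x,\,f(v)\neq x\}$ (after the reductions) lie in a single component of the symmetric graph on $N^{\mathrm{sym}}(x)$. The plan is to exploit the fact that $f^{-1}(x)$ is a clique in the symmetric graph of $G$ (since $xx\in S(H)$ forces any two vertices mapped to $x$ to be symmetrically adjacent) and to apply Proposition~\ref{prop:tarjan} to a minimal chordal completion of the symmetric graph of $G$: combined with the homomorphism property that symmetric walks in $G$ map to symmetric walks in $H$, the proposition confines the image of each sym-connected component of $V(G)\setminus f^{-1}(x)$ to a single component of $N^{\mathrm{sym}}(x)$. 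Where $|f^{-1}(x)|>1$ the enumeration of $v_x^*$ is extended to include a maximal clique of the chordal completion containing $f^{-1}(x)$ (at most $n$ choices per strong loop), providing the extra freedom needed to handle those cases while keeping the overall transformation polynomial.
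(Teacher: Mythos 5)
Your overall strategy is the paper's: pin down a representative vertex for $x$, use the fact that $f^{-1}(x)$ is a symmetric clique, and invoke Proposition~\ref{prop:tarjan} on a minimal chordal completion to confine the relevant images to one component of the symmetric neighbourhood of $x$. But the step you yourself flag as the ``main technical obstacle'' is exactly where the argument breaks, and your proposed fix does not close it. The claim that the proposition ``confines the image of each sym-connected component of $V(G)\setminus f^{-1}(x)$ to a single component of $N^{\mathrm{sym}}(x)$'' is false: a homomorphism only confines such a component to a single component of the symmetric graph of $H$ with $x$ deleted, and that component may contain several components of the graph induced on $N^{\mathrm{sym}}(x)$, glued together by vertices outside $N^{\mathrm{sym}}(x)$. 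Concretely, let $H$ have a strong loop at $x$, symmetric weak edges $xy$, $xz$, $yw$, $wz$, and no edge $yz$ or $xw$. Two vertices $u,v$ of $G$ with $x$ on their lists can map to $y$ and $z$ respectively while lying in one sym-component of $G-f^{-1}(x)$ via a vertex $t$ (without $x$ on its list) mapping to $w$. Here $y$ and $z$ lie in different components of the symmetric graph induced on $N^{\mathrm{sym}}(x)=\{y,z\}$, so no connected $C_x$ covers both; yet a maximal clique of a minimal chordal completion of the \emph{whole} symmetric graph of $G$ may contain both $u$ and $v$, and your transformation then destroys the solution.

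The paper avoids this by first enforcing arc-consistency and representatives (Lemmata~\ref{lem:1} and~\ref{lem:2}): with a vertex $v_x$ whose list is $\{x\}$, arc-consistency forces $L(u)\subseteq N$ for every $u\in B=\{u: x\in L(u)\}$, where $N$ is $x$ together with its symmetric neighbours in $H$; hence $f(B)\subseteq N$, and the chordal completion and Proposition~\ref{prop:tarjan} are applied to the symmetric graph induced on $B$ rather than to all of $G$. Only then do components of $G'[B]-f^{-1}(x)$ genuinely map into components of $H'[N]-x$, which is what makes the chosen component $K$ well defined. In your write-up this restriction to $B$ is absent and the prerequisite $f(B)\subseteq N$ is never established. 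A second, related defect: you restrict the list of \emph{every} $v\in B$ sym-adjacent to $v_x^*$ to $\{x\}\cup C_x$, whereas only the vertices of the selected maximal clique $C''$ may be restricted this way; the remaining vertices of $B$ must merely lose $x$ from their lists, since they may legitimately map to other components of $H'[N]-x$. Your last paragraph gestures at adding $C''$ to the enumeration only when $|f^{-1}(x)|>1$, but the clique is needed in all cases, and the corrected construction is never actually stated.
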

\begin{proof}
Let $G$ be a digraph with lists $L$, and let $f$ be a list $H$-colouring of $G$
with respect to $L$. By Lemmata \ref{lem:1} and \ref{lem:2}, we may assume that
the lists $L$ are arc-consistent and contain representatives.

Let $x$ be a strong loop of $H$.  Since $f$ is a homomorphism, if $f(u)=f(v)=x$
for $u,v\in V(G)$, we must have both $uv$ and $vu$ in $E(G)$.  Hence, the set
$C=f^{-1}(x)$ induces a symmetric clique in $G$.  

Let $G'$ be the symmetric graph of $G$, and $H'$ be the symmetric trigraph
associated with $H$. Clearly, $C$ also induces a clique in $G'$.  Moreover, it
is easy to show that $f$ is also a list $H'$-colouring of $G'$ with respect to
$L$.

Now, suppose that $x$ appears on some list. Then, since the lists $L$ contain
representatives, there is a vertex $v_x\in V(G)$ with $L(v_x)=\{x\}$.

Let $N$ denote the subset of $V(H)$ containing $x$ and its symmetric neighbours.
Let $M$ denote the subset of $V(G)$ containing $v_x$ and its symmetric
neighbours. Let $B$ denote the subset of $V(G)$ containing all vertices $u\in
V(G)$ with $x\in L(u)$.  Since the lists $L$ are arc-consistent, we must have
$B\subseteq M$. Also, for each $u\in M$, we have $L(u)\subseteq N$.

Now, since $f$ respects the lists $L$,  we have $x\in L(u)$, for each $u\in C$.
Hence, $C\subseteq B$, and therefore, $C$ induces a clique in $G'[B]$.
Furthermore, since $B\subseteq M$, we have $L(u)\subseteq N$ for each $u\in B$.
Hence, $f$ restricted to $B$ is a homomorphism of $G'[B]$ to $H'[N]$.  In
particular, each component of $G'[B]-C$ maps by $f$ to a unique component of
$H'[N]-x$.

Now, let $G''$ be a minimal chordal completion of $G'[B]$. Let $C''$ be the
vertices of a maximal clique of $G''$ that completely contains $C$.  By
Proposition \ref{prop:tarjan}, all vertices of $C''\setminus C$ belong to one
component of $G'[B]-C$. Hence, by the above remark, there is a unique component
$K$ of $H'[N]-x$ such that $f(u)\in V(K)$ for each $u\in C''\setminus C$.

It now follows that $G$ admits a list $H$-colouring with respect to $L$ if and
only if for some maximal clique $C''$ of a minimal chordal completion $G''$ of
$G'[B]$, and some component $K$ of $H'[N]-x$, the graph $G$ admits a list
$H$-colouring with respect to $L$ such that the vertices outside of $C''$ do not
map to $x$, and the vertices in $C''$ map to $x$ or the vertices of $K$.  Hence,
we can modify the lists $L$ by removing $x$ from the vertices outside $C''$ and
by reducing the lists of the vertices in $C''$ to $V(K)\cup\{x\}$. Such lists,
clearly, contain a strong representative for $x$. (Take $S_x$ to
be~$V(K)\cup\{x\}$.)

To conclude, we remark that the graph $G'$ and the sets $B$ and $N$, as well as,
a minimal chordal completion $G''$ of $G'[B]$ by the result \cite{tarjan} can be
found in polynomial time. Since $G''$ is chordal, it has at most $n$ maximal
cliques \cite{fulkersongross}. Also, there are at most $k$ different components
$K$ of $H'[N]-x$. Hence, we can reduce the problem to at most $(nk)^k$ different
instances with strong representatives. The proof is now complete.
\end{proof}

Note that if lists $L$ contain representatives, respectively strong
representatives, then any non-empty reduction $L'$ of $L$ also contains
representatives, respectively strong representatives. 
\medskip

\def\setminuss{\hspace{0.15em}\backslash\hspace{-0.3em}\backslash\hspace{0.1em}}

Let $F$ be a set of edges of $H$.  Let $G\setminuss F$ denote the graph obtained
from $G$ by removing all edges $uv\in E(G)$ such that $xy\in F$ for some $x\in
L(u)$~and~$y\in L(v)$.

We say that lists $L$ are {\em separator-consistent on $F$}, if for each
component $C$ of $G\setminuss F$ and each component $K$ of $H\setminus F$,
\begin{enumerate}[(i)]
\item either there exists a list $K$-colouring of $C$ with respect to $L$, 
\item or no vertex of $K$ appears on the list of any vertex of $C$.
\end{enumerate}

\begin{lemma}\label{lem:4}
If L-HOM$(H\setminus F)$ is polynomial time solvable, then the lists $L$ can be
reduced to be separator-consistent on $F$.
\end{lemma}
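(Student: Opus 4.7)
The plan is to test separator-consistency pair by pair, using the polynomial algorithm for L-HOM$(H\setminus F)$ as a subroutine, and iteratively trim the lists. For each component $C$ of $G\setminuss F$ and each component $K$ of $H\setminus F$, I would restrict the lists on $V(C)$ to $L_K(v):=L(v)\cap V(K)$ and call the decision procedure for L-HOM$(H\setminus F)$ on the instance $(C,L_K)$; since $L_K(v)\subseteq V(K)$, a list $(H\setminus F)$-colouring with these lists coincides with a list $K$-colouring of $C$. If the call reports ``no'' while some $L_K(v)$ is still non-empty, the pair $(C,K)$ witnesses a violation of separator-consistency; I would then delete $V(K)$ from $L(v)$ for every $v\in V(C)$, recompute $G\setminuss F$ (which can only regain edges as lists shrink), and repeat. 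Termination is immediate: each pass either halts or strictly shrinks some list, so at most $O(nk)$ passes occur, each making at most $O(nk)$ calls to the assumed polynomial-time procedure.

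Correctness of the deletions rests on the following observation. If $f$ is any list $H$-colouring of $G$, then for every edge $uv\in E(G\setminuss F)$ we have $f(u)f(v)\in W(H)\cup S(H)$ (because $uv\in E(G)$ and $f$ is a homomorphism) and $f(u)f(v)\notin F$ (otherwise $uv$ would have been deleted from $G\setminuss F$, since $f(u)\in L(u)$ and $f(v)\in L(v)$); hence $f(u)f(v)$ is an edge of $H\setminus F$. By connectivity, $f(V(C))$ is therefore contained in a single component $K^*$ of $H\setminus F$ for every component $C$ of $G\setminuss F$. Consequently, deleting $V(K)$ from the lists of $V(C)$ can destroy a solution $f$ only if $f(V(C))\subseteq V(K)$; but then $K^*=K$ and $f|_{V(C)}$ is itself a list $K$-colouring of $C$, contradicting the premise for the deletion. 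This gives property (iv) of the definition of ``reduced'', while (i)--(iii) follow from construction.

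The main delicate step will be confirming that $f|_{V(C)}$ is genuinely a list $K$-colouring of $C$: the edge constraints are handled by the observation above, but the non-edge constraints at pairs $u,v\in V(C)$ with $uv\notin E(G\setminuss F)$ require separate care. When $uv\notin E(G)$ this is immediate from $f$ being an $H$-homomorphism, while when $uv\in E(G)$ was stripped from $G\setminuss F$ by some $xy\in F$, I would read $C$ as the induced subgraph of $G\setminuss F$ (so that such pairs contribute no constraint to the homomorphism condition of $C$ into $K$), and, if necessary, precede the whole process by an application of Lemma \ref{lem:1} in order to thin out list entries whose presence was causing spurious edge deletions.
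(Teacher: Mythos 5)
Your proposal is essentially the paper's proof: for each component $C$ of $G\setminuss F$ and each component $K$ of $H\setminus F$, test with the assumed polynomial-time algorithm whether $C$ admits a list $K$-colouring and, if not, delete $V(K)$ from the lists on $C$, the deletions being justified because every edge surviving in $G\setminuss F$ must map under any list $H$-colouring to an edge of $H\setminus F$ (as $f$ respects the lists), so each component $C$ is coloured inside a single component of $H\setminus F$. The paper compresses all of this into the single remark that homomorphisms map connected graphs only to connected graphs; you are, if anything, more explicit than the paper about the iteration needed as $G\setminuss F$ regains edges when lists shrink and about the non-edge subtlety raised in your last paragraph, both of which the paper leaves implicit.
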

\begin{proof}
We obtain a separator-consistent lists $L'$ from $L$ as follows. For each
component $C$ of $G\setminuss F$, and each component $K$ of $H\setminus F$, we
test if $C$ admits a list $K$-colouring with respect to $L$.  If not, then we
remove all vertices of $K$ from the lists of the vertices of $C$.  Since
homomorphisms map connected graphs only to connected graphs,
the claim follows.
\end{proof}

In particular, we have the following property.

\begin{lemma}\label{lem:5}
If L-HOM$(H\setminus F)$ is polynomial time solvable, then the lists $L$ can be
reduced to be arc-consistent and separator-consistent on $F$.
\end{lemma}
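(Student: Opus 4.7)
The plan is to combine Lemma~\ref{lem:1} and Lemma~\ref{lem:4} by alternating their reductions until no further change occurs, and then argue that the fixed point satisfies both properties simultaneously.

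More concretely, I would iterate the following loop. Start with the given lists $L$. First apply Lemma~\ref{lem:1} to replace the current lists by arc-consistent lists (a reduction). Then apply Lemma~\ref{lem:4} (using the hypothesis that L-HOM$(H\setminus F)$ is polynomial time solvable) to replace them by lists that are separator-consistent on $F$ (also a reduction). Repeat these two steps in alternation until neither step removes any vertex from any list.

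The key observations to verify are: (a) both Lemma~\ref{lem:1} and Lemma~\ref{lem:4} produce reductions of the current lists while preserving the set of list $H$-colourings of $G$, so the composition of any number of applications is itself a valid reduction preserving solutions; (b) each individual pass through the loop that actually changes the lists must remove at least one element from some $L(u)$, so the process terminates after at most $nk$ iterations (each invocation of Lemma~\ref{lem:1} or Lemma~\ref{lem:4} already runs in polynomial time); and (c) at the fixed point, the lists are unchanged by Lemma~\ref{lem:1}, hence arc-consistent, and they are unchanged by Lemma~\ref{lem:4}, hence separator-consistent on $F$.

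The only mildly subtle point, which I expect to be the main thing worth spelling out, is that the two properties can interfere: reducing for arc-consistency may destroy separator-consistency (a previously valid list $K$-colouring of some component $C$ of $G\setminuss F$ may cease to exist once some vertices are dropped from the lists), and vice versa. The alternation handles this because any interference triggers a further strict reduction of the lists, and finiteness of $\sum_{u}|L(u)| \le nk$ guarantees that the alternation stabilises in polynomially many rounds. Since every intermediate list set is a solution-preserving reduction of $L$, the final lists $L'$ are a polynomial-time-computable reduction of $L$ that is both arc-consistent and separator-consistent on $F$, and has the same list $H$-colourings as $L$, which is exactly what Lemma~\ref{lem:5} asserts.
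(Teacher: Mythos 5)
Your proposal is correct and matches the paper's proof exactly: the paper also alternates Lemmata~\ref{lem:1} and~\ref{lem:4} until the lists stabilise, noting that termination follows because each step strictly reduces the lists. Your additional remarks on interference between the two consistency conditions and the $nk$ bound on the number of rounds are just a more explicit spelling-out of the same argument.
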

\begin{proof}
We apply Lemmata \ref{lem:1} and \ref{lem:4} to $L$ until the lists no longer
change. Since at each step the lists are reduced, the claim follows.
\end{proof}

We remark that for $X\subseteq V(H)$, we say that the lists $L$ are {\em
separator-consistent on $X$}, if they are separator-consistent on $F$, where $F$
is the set of edges of $H$ with at least one endpoint in $X$. \bigskip

Let $X$ and $Y$ be two sets of vertices of $H$ such that each vertex of $X$ has
a strong loop, and no vertex of $Y$ has a loop.

We say that lists $L$ are {\em sparse-dense-consistent on $X$ and $Y$} if for
each $v\in V(G)$ such that $L(v)\subseteq X\cup Y$, we have $L(v)\subseteq X$ or
$L(v)\subseteq Y$.

\begin{lemma}\label{lem:6}
If L-HOM$(H[X])$ and L-HOM$(H[Y])$ are polynomial time solvable, then the lists $L$
can be transformed to be sparse-dense-consistent on $X$ and $Y$.
\end{lemma}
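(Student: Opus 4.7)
The plan is to enumerate polynomially many candidate ``type assignments'' that label each relevant vertex as either $X$-bound or $Y$-bound, and to include the corresponding reduced list for each assignment in the output family $\mathcal{L}$. Formally, let $U = \{v \in V(G) : L(v) \subseteq X \cup Y\}$ and let $U_{mix} \subseteq U$ consist of vertices whose list meets both $X$ and $Y$; vertices in $U \setminus U_{mix}$ already have one-sided lists and do not obstruct sparse-dense-consistency. For each $v \in U_{mix}$ and any list $H$-colouring $f$, $f(v)$ lies in $X$ or in $Y$, which I call the \emph{type} of $v$ under $f$. Given a candidate type assignment $\tau : U_{mix} \to \{X,Y\}$, define the reduction $L_\tau(v) = L(v) \cap \tau(v)$ for $v \in U_{mix}$ and $L_\tau(v) = L(v)$ otherwise. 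By construction each $L_\tau$ is sparse-dense-consistent, and every list $H$-colouring with respect to $L$ agrees with some $L_\tau$, so conditions (ii)--(iv) of transformation are automatic; the entire burden is condition (i), namely that $\mathcal{L}$ has polynomial size and is computable in polynomial time.

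The naive enumeration of all $2^{|U_{mix}|}$ assignments is exponential in $n$, so the main obstacle is to restrict attention to polynomially many candidates. My plan uses three levers in combination. First, the structural consequences of the dense/sparse split: since every $x \in X$ carries a strong loop, $f^{-1}(x)$ must induce a symmetric clique of $G$; since no $y \in Y$ has a loop, $f^{-1}(y)$ must be an independent set in the underlying graph of $G$. Second, arc-consistency propagation (Lemma~\ref{lem:1}) typically forces the types of many vertices once a few ``seed'' types are fixed. Third, the hypothesised polynomial algorithms for L-HOM$(H[X])$ and L-HOM$(H[Y])$, which can be used as oracles to verify feasibility of the $X$-side and $Y$-side of any candidate $L_\tau$ independently.

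A concrete route, modelled on Lemma~\ref{lem:3}, is to enumerate, for each $x \in X$, a maximal clique $C_x$ of a minimal chordal completion of the subgraph of $G$ induced by $\{v : x \in L(v)\}$, as a guess for $f^{-1}(x)$. By Fulkerson--Gross there are at most $n$ such maximal cliques, giving $n^{|X|}$ combined choices; since $|X| \leq |V(H)|$ is a constant, this is polynomial. The analogous step on the $Y$-side is dual: since $f^{-1}(y)$ must be independent, one can instead guess, for each mixed vertex in turn, whether it belongs to $f^{-1}(X)$ or $f^{-1}(Y)$ by consulting the $X$-clique guesses already made. Each completed guess yields a $\tau$, and hence an $L_\tau$; running L-HOM$(H[X])$ on $\{v : L_\tau(v) \subseteq X\}$ and L-HOM$(H[Y])$ on $\{v : L_\tau(v) \subseteq Y\}$ prunes infeasible assignments in polynomial time.

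The hard part will be the bookkeeping that justifies the polynomial bound: one must argue that every list $H$-colouring of $G$ under $L$ is consistent with at least one of the enumerated clique choices (so that $\mathcal{L}$ is complete), and that the interactions between the $X$-type and $Y$-type parts of $L_\tau$ --- mediated by weak edges of $H$ between $X$ and $Y$ --- are correctly reflected by arc-consistency and the two oracles. I expect that the $X$-side clique enumeration combined with arc-consistency propagation forces the $Y$-side types uniquely per candidate, closing the argument and keeping $|\mathcal{L}|$ bounded by $n^{O(|V(H)|)}$, which is polynomial in $n$ for the fixed trigraph $H$.
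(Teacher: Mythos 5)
Your framework is right---enumerate a polynomial family of type assignments $\tau$, reduce lists accordingly, and observe that correctness of conditions (ii)--(iv) is automatic---but the entire content of the lemma lies in the step you defer to ``bookkeeping,'' and your proposed route for that step does not close. The paper's proof rests on the sparse-dense partition theorem of Feder, Hell, Klein and Motwani \cite{fhkm}: if $\mathcal S$ and $\mathcal D$ are hereditary classes whose intersection contains only graphs on at most $c$ vertices, then an $n$-vertex graph has at most $n^{2c}$ partitions $V_1\cup V_2$ with $G[V_1]\in\mathcal S$, $G[V_2]\in\mathcal D$, and these can be enumerated in polynomial time given recognition algorithms for the two classes. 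Taking $\mathcal S$ to be the $H[X]$-colourable digraphs and $\mathcal D$ the $H[Y]$-colourable ones, the strong loops on $X$ and the looplessness of $Y$ give $c\le |X|\cdot|Y|$ (a graph covered by $|X|$ symmetric cliques and by $|Y|$ independent sets has at most $|X|\cdot|Y|$ vertices), and the hypothesised algorithms supply the recognition oracles. That combinatorial theorem is exactly the polynomial bound you need, and you neither invoke it nor reprove it.

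Your substitute---guessing, for each $x\in X$, a maximal clique $C_x$ of a minimal chordal completion of $G[\{v: x\in L(v)\}]$, then letting arc-consistency finish---fails at two points. First, as in Lemma \ref{lem:3}, such a clique $C_x$ only \emph{contains} $f^{-1}(x)$; it does not equal it, so every vertex inside $C_x$ still has both types available and the guess does not determine $\tau$. Second, the claim that arc-consistency then ``forces the $Y$-side types uniquely per candidate'' is asserted, not proved, and is false in general: already for $X=\{x\}$, $Y=\{y\}$ with $xy$ a weak symmetric edge and all lists equal to $\{x,y\}$, arc-consistency removes nothing, and after fixing a clique $C''\supseteq f^{-1}(x)$ the vertices of $C''$ can still individually go either way. (The number of split-type partitions is nevertheless polynomial---but that is again the $c=1$ case of the sparse-dense theorem, not a consequence of propagation.) So the proposal has a genuine gap at its central step; to repair it you would essentially have to import or reprove the partition-counting result from \cite{fhkm}.
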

\begin{proof}
Let $\mathcal S$ and $\mathcal D$ be classes of digraphs closed under taking
induced subgraphs. Suppose that there is a constant $c=c(\mathcal S,\mathcal D)$
such that each digraph in $\mathcal S\cap \mathcal D$ has at most $c$ vertices.
Then by \cite{fhkm}, for any $n$-vertex digraph $G$, there are at most $n^{2c}$
partitions $V(G)=V_1\cup V_2$ such that $G[V_1]\in\mathcal S$ and
$G[V_2]\in\mathcal D$. (We call each such partition an {\em $(\mathcal S,\mathcal
D)$-partition} of $G$.) All these partitions can be enumerated in time
$n^{2c+2}T(n)$ where $T(n)$ is the complexity of recognizing digraphs in
$\mathcal S$ and in $\mathcal D$.

Now, let $\mathcal S$ be the set of all digraphs admitting an $H[X]$-colouring
and $\mathcal D$ be the set of all digraphs admitting an $H[Y]$-colouring; we
have $c(\mathcal S,\mathcal D)\leq |X|\cdot |Y|$.  Let $Z$ denote the vertices
$u$ of $G$ with $L(u)\subseteq X\cup Y$. We observe that any list $H$-colouring
of $G$ with respect to $L$ induces an $(\mathcal S,\mathcal D)$-partition of
$G[Z]$. Hence, for each $(\mathcal S,\mathcal D)$-partition $Z=A\cup B$ of
$G[Z]$, we construct lists $L'$ on $H$ as follows: $L'(u)=L(u)\cap X$ if $u\in
A$, $L'(u)=L(u)\cap Y$ if $u\in B$, and $L'(u)=L(u)$ otherwise. Clearly, $L'$ is
sparse-dense consistent on $X$ and $Y$. Now, since there are at most $n^{k^2}$
such partitions, the claim follows.
\end{proof}

Let $D$ be a set of vertices of $H$, and let $x,y$ be two vertices of $H$.  We
say that $y$ {\em weakly dominates $x$ on $D$}, if for each $z\in D$, we have
that $zy$ is a weak edge in $H$ whenever $zx$ is a weak edge in $H$, and $yz$ is
a weak edge in $H$ whenever $xz$ is a weak edge in $H$.

We say that {\em $y$ weakly dominates $x$} if $y$ weakly dominates $x$ on
$V(H)$.  

\begin{lemma}\label{lem:7}
If $y$ weakly dominates $x$, then the lists $L$ can be reduced to satisfy $x\not\in
L(u)$ whenever $y\in L(u)$.
\end{lemma}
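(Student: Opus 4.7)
The plan is to explicitly reduce the lists by deleting $x$: set $L'(u) = L(u) \setminus \{x\}$ whenever $y \in L(u)$, and $L'(u) = L(u)$ otherwise. This reduction is computable in linear time (property (i)), is pointwise a subset of $L$ (property (ii)), and by construction satisfies the target property (iii) that $x \notin L'(u)$ whenever $y \in L'(u)$. The only substantive content of the lemma is property (iv): the equivalence of list $H$-colourings of $G$ with respect to $L$ and with respect to $L'$.

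One direction of (iv) is immediate since $L'(u) \subseteq L(u)$. For the other, I would take a list $H$-colouring $f$ of $G$ with respect to $L$ and define $f'$ by $f'(u) = y$ whenever $f(u) = x$ and $y \in L(u)$, and $f'(u) = f(u)$ otherwise. Respect of $L'$ is immediate: swapped vertices satisfy $f'(u) = y \in L'(u)$, while any unchanged vertex $u$ with $y \in L(u)$ must have $f(u) \neq x$ (else it would have been swapped), so $f(u) \in L(u) \setminus \{x\} = L'(u)$. The remaining task is to verify that $f'$ is a homomorphism of $G$ to $H$.

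The verification is a case analysis on ordered pairs $(u,v)$ of distinct vertices of $G$, and is trivial when neither endpoint is swapped. For a pair in which at least one endpoint has been changed from $x$ to $y$, one applies the weak-domination hypothesis to transfer the homomorphism constraint at $f$ into one at $f'$: weak edges at $x$ (in either direction) are weak edges at $y$, so the requirements ``edge of $G$ maps to $W(H) \cup S(H)$'' and ``non-edge of $G$ avoids $S(H)$'' remain satisfied after the substitution. I expect the main subtlety to be the \emph{double-swap} case where $f(u) = f(v) = x$, which requires applying weak domination twice (first with $z = x$ to pass from $xx$ to $xy$, then with $z = y$ to pass from $xy$ to $yy$) to transfer loop information at $x$ to $y$; once this and the corresponding in-arc version are handled, $f'$ is a list $H$-colouring of $G$ with respect to $L'$ and property (iv) follows.
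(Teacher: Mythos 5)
Your reduction and the swap map $f'$ coincide with the paper's, but your verification that $f'$ is a homomorphism has a genuine gap: weak domination only transfers \emph{weak} edges at $x$ to $y$, and you treat this as if it covered all the constraints. It does not. If $uv\in E(G)$, $f(u)=x$ is swapped to $y$, and $f(u)f(v)=xt$ is a \emph{strong} edge, nothing forces $yt$ to be an edge at all; and if $uv\notin E(G)$ and $xt$ is a \emph{non-edge}, nothing forces $yt\notin S(H)$. A concrete failure: let $V(H)=\{x,y,t\}$ with $S(H)=\{xt\}$ and $W(H)=\emptyset$, so that $y$ weakly dominates $x$ vacuously; let $G$ be a single arc $uv$ with $L(u)=\{x,y\}$ and $L(v)=\{t\}$. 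Then $u\mapsto x$, $v\mapsto t$ is a list $H$-colouring, but after deleting $x$ from $L(u)$ there is none, since $yt$ is not an edge. So the reduction applied to raw lists is simply wrong, and no amount of case analysis on weak edges will fix it.

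The missing idea is the preprocessing the paper performs first: pass to lists that are arc-consistent and contain representatives (Lemmata \ref{lem:1} and \ref{lem:2}). Using a representative $v_t$ with $L(v_t)=\{t\}$, arc-consistency shows that whenever a ``bad'' vertex $t$ appears on some list (that is, $xt\in S(H)$ while $yt$ is a non-edge, or $yt\in S(H)$ while $xt$ is a non-edge, or the analogous in-arc configurations), \emph{no} list can contain both $x$ and $y$, so the swap never occurs and the reduction is vacuous; in the remaining case every transfer your case analysis requires is available, including what is needed in the double-swap situation when $xx$ is strong or a non-edge, which weak domination alone does not handle either. (In my example the lists are not arc-consistent: arc-consistency already deletes $y$ from $L(u)$.) You need to add this preprocessing and the accompanying ``no list contains both $x$ and $y$'' argument before your verification is valid.
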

\begin{proof}
By Lemmata \ref{lem:1} and \ref{lem:2}, we may assume that the lists $L$ are
arc-consistent and contain representatives. Let $f$ be a list $H$-colouring of
$G$ with respect to $L$. Let $f'$ be a mapping such that $f'(u)=y$ if $f(u)=x$
and $y\in L(u)$, and $f'(u)=f(u)$ otherwise. We show that $f'$ is also a
homomorphism and the claim will follow.

Let $t$ be any vertex (including $x$ and $y$) of $H$. Suppose that $xt\in S(H)$,
$yt$ is not an edge, and $t$ appears on some list. We claim that no list $L(u)$
contains both $x$ and $y$. To prove this, let $v_t$ be a vertex with
$L(v_t)=\{t\}$. Suppose that $x,y\in L(u)$. If $uv_t\in E(G)$, then by
arc-consistency of $L$, we obtain $y\not\in L(u)$. If $uv_t\not\in E(G)$,
similarly $x\not\in L(u)$, a contradiction.

Moreover, by symmetry, if $tx\in S(H)$ and $ty$ is not an edge, or if $x$ and
$y$ exchange places, we also have that no list contains both $x$ and $y$.
In~addition, since $y$ weakly dominates $x$, we obtain that either both $x$
and~$y$ have strong loops, or both have no loops, or $y$ has a weak loop.

Now, it is not difficult to directly verify that $f'$ is a homomorphism using
the above observations and the fact that $y$ weakly dominates $x$.

Hence, $G$ has a list $H$-colouring with respect to $L$ if and only if $G$ has a
list $H$-colouring with respect to the lists $L'$ obtained from $L$ by removing $x$
from each list $L(v)$ that also contains $y$.  That concludes the proof.
\end{proof}

Let $X$, $Y$, and $Z$ be three sets of vertices of $H$ such that each vertex of
$X$ has a strong loop, no vertex of $Y$ has a loop, and each vertex of $Z$ has a
weak loop. Suppose that each vertex of $Z$ weakly dominates each vertex of $Y$,
and that we have $L(u)\subseteq X\cup Y\cup Z$ whenever $L(u)\cap
X\neq\emptyset$.

\begin{lemma}\label{lem:8}
Let $X,Y,Z$ be as above. If L-HOM$(H[X])$ and \mbox{L-HOM$(H-X)$} are polynomial
time solvable, then the lists $L$ can be transformed to be arc-consistent,
separator-consistent on $X$, and sparse-dense-consistent on $X$~and~$Y$.
\end{lemma}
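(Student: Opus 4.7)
\medskip
\noindent\textbf{Proof plan.} The plan is to apply Lemmas~\ref{lem:7}, \ref{lem:6}, and \ref{lem:5} in that order, exploiting the weak domination of $Y$ by $Z$ together with the hypothesis that $L(u)\subseteq X\cup Y\cup Z$ whenever $L(u)\cap X\neq\emptyset$ to guarantee that the three consistency properties can be achieved simultaneously.

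First, I would apply Lemma~\ref{lem:7} once for each pair $(y_0,z)$ with $y_0\in Y$ and $z\in Z$: since $z$ weakly dominates $y_0$ by hypothesis, each such application reduces the lists so that $y_0\notin L(v)$ whenever $z\in L(v)$. After processing all $|Y|\cdot|Z|$ pairs (still a single reduction), the resulting lists satisfy the invariant that $L(v)\cap Y=\emptyset$ whenever $L(v)\cap Z\neq\emptyset$, an invariant that is preserved under any further list reduction.

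Next, I apply Lemma~\ref{lem:6} to the pair $(X,Y)$, producing a polynomially-sized family of instances, each sparse-dense-consistent on $X$ and $Y$. The hypotheses are met: L-HOM$(H[X])$ is polynomial by assumption, and L-HOM$(H[Y])$ is polynomial because $H[Y]$ is an induced subgraph of $H-X$, so any instance of L-HOM$(H[Y])$ is directly an instance of L-HOM$(H-X)$. To each resulting instance I then apply Lemma~\ref{lem:5} with $F$ the set of edges of $H$ having at least one endpoint in $X$, to obtain arc-consistency and separator-consistency on $X$. The components of $H\setminus F$ are the (now loopless) singleton vertices of $X$ together with the components of $H-X$, so L-HOM$(H\setminus F)$ is polynomial: it decomposes component-wise into trivial single-vertex problems and restrictions of L-HOM$(H-X)$.

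The main obstacle is ensuring that the reductions inside Lemma~\ref{lem:5} do not destroy the sparse-dense-consistency obtained from Lemma~\ref{lem:6}. I would handle this by establishing a stronger invariant holding immediately after the first two steps: for every $v$ with $L(v)\cap X\neq\emptyset$ one has $L(v)\cap Y=\emptyset$. Indeed, by hypothesis $L(v)\subseteq X\cup Y\cup Z$; if $L(v)\cap Z\neq\emptyset$ the step-one invariant forces $L(v)\cap Y=\emptyset$, while if $L(v)\cap Z=\emptyset$ then $L(v)\subseteq X\cup Y$ and sparse-dense-consistency forces $L(v)\subseteq X$ (since $L(v)$ meets $X$), again yielding $L(v)\cap Y=\emptyset$. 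This stronger invariant is preserved under further list reduction, and it trivially implies sparse-dense-consistency on $X$ and $Y$, so Lemma~\ref{lem:5} leaves all three desired properties intact.
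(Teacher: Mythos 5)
Your proof is correct and takes essentially the same route as the paper's: first apply Lemma~\ref{lem:7} to every pair in $Y\times Z$, then Lemma~\ref{lem:6} on $X$ and $Y$, then Lemma~\ref{lem:5}, justifying the order by a reduction-stable invariant that keeps sparse-dense-consistency intact through the final reductions. The paper phrases that invariant as the four-way disjunction $L_i(u)\subseteq X$, or $L_i(u)\subseteq Y$, or $L_i(u)\cap X=\emptyset$, or $L_i(u)\cap Y=\emptyset$, which is interchangeable with your condition; your explicit verification of the hypotheses of Lemmas~\ref{lem:6} and~\ref{lem:5} is a welcome extra detail the paper leaves implicit.
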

\begin{proof}
We observe that by Lemma \ref{lem:7}, we may assume that no list $L(u)$ contains
both a vertex of $Y$ and a vertex of $Z$. Hence, either $L(u)\subseteq X\cup Z$,
or $L(u)\subseteq X\cup Y$, or $L(u)\cap X=\emptyset$ for each $u\in V(G)$. 

Now, let $L_i$ be one of the lists we obtain from $L$ by applying sparse-dense
consistency on $X$ and $Y$ (Lemma \ref{lem:6}). Then we must have
either $L_i(u)\subseteq X$, or $L_i(u)\subseteq Y$, or $L_i(u)\cap X=\emptyset$,
or $L_i(u)\cap Y=\emptyset$ for each $u\in V(G)$. In particular, any reduction
of $L_i$ must also satisfy this condition.  Hence, we can apply Lemma
\ref{lem:5} to $L_i$, and the claim follows.
\end{proof}

Let $A\cup B$ be a partition of the vertices of $H$.  Let $F$ be the edges of
$H$ that have exactly one endpoint in $A$. Let $X$ respectively $Y$ be the
vertices of $A$ respectively $B$ with at least one incident edge in $F$.

We say that $G$ is {\em separable} on $F$, if for each vertex $v\in V(G)$,
\begin{enumerate}[(i)]
\item $L(v)$ contains at most one vertex of $X$ and at most one vertex of $Y$,
\item if $L(v)$ contains a vertex of $X$, it contains no vertex of
$A\setminus X$, and
\item if $L(v)$ contains a vertex of $Y$, it contains no vertex of
$B\setminus Y$.
\end{enumerate}

\begin{lemma}\label{lem:9}
If L-HOM($H-X-Y$) is polynomial time solvable, then L-HOM($H$) is polynomial
time solvable on the class of all digraphs $G$ separable on $F$.
\end{lemma}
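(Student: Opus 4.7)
The plan is to reduce the problem to a single call of the L-HOM$(H-X-Y)$ oracle, preceded by a polynomial-time ``side decision'' phase implemented via 2-SAT. The key observation from separability is that each vertex $v\in V(G)$ has a list splitting cleanly: $L(v)\cap A$ is either the singleton $\{x_v\}$ (when $v\in W_X$) or a subset of $A\setminus X$, and symmetrically for $L(v)\cap B$. Moreover, since $H-X-Y$ has no edges between $A\setminus X$ and $B\setminus Y$, any $G$-edge between two ``interior'' vertices (those whose list lies in $V(H-X-Y)$) forces them to the same side. Thus each vertex has at most two natural options---``$A$-side'' or ``$B$-side''---so the high-level decision is amenable to 2-SAT.

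After applying Lemma~\ref{lem:1} to enforce arc-consistency, I introduce a 2-SAT variable $p_v$ for each vertex whose list genuinely spans both sides. For each ordered pair $uv$ of vertices of $G$ I add clauses encoding which side combinations $(\sigma_u,\sigma_v)$ are locally realisable by some compatible pair $(f(u),f(v))$---respecting the $H$-edge constraint if $uv\in E(G)$ and the $H$-non-edge constraint otherwise. For instance, on an edge $uv\in E(G)$ the combination $(A,B)$ is feasible only when $u\in W_X$, $v\in W_Y$, and $x_uy_v\in W(H)\cup S(H)$; otherwise the clause $\neg p_u\vee p_v$ is added. Solving 2-SAT in polynomial time either certifies infeasibility (output NO) or yields a side assignment $V(G)=V_A\cup V_B$. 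I then fix $f(v)=x_v$ for $v\in W_X\cap V_A$ and $f(v)=y_v$ for $v\in W_Y\cap V_B$, propagate these fixed images through $G$'s edges and non-edges to restrict the lists of all remaining neighbours, and invoke the L-HOM$(H-X-Y)$ oracle on the residual instance (whose vertices now all have lists in $V(H-X-Y)$).

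The main obstacle is completeness: a satisfying 2-SAT assignment must actually extend to a valid $H$-colouring, yet 2-SAT only captures pairwise constraints, whereas the residual L-HOM$(H-X-Y)$ subproblem may detect non-pairwise conflicts---for example, $S(H)$-edge violations among several interior vertices on the same side with overlapping lists. My resolution is an iterative refinement: if the oracle reports infeasibility, re-apply arc-consistency to the propagated instance to obtain a local certificate of failure (an empty list at some vertex, or a pair of vertices whose restricted lists cannot be jointly satisfied), and translate it into one or more new 2-SAT clauses forbidding the current side pattern. Since there are only $O(|V(G)|^2)$ distinct 2-literal clauses over the side variables, at most polynomially many iterations occur, so the overall algorithm remains polynomial. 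A cleaner alternative I would pursue first is to prove that under arc-consistency and separability a single 2-SAT call already suffices---that is, every satisfying assignment extends---making the iteration unnecessary.
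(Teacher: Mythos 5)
You have correctly identified the paper's central device---a polynomial-time 2-SAT phase that decides, for each vertex, whether it maps to its (unique, by separability) allowed vertex of $X$, its allowed vertex of $Y$, or into the interior $A\setminus X$ or $B\setminus Y$---and your observation that separability plus arc-consistency reduces the interior to L-HOM($H-X-Y$) via forced singleton lists also appears in the paper. However, you have put your finger on the real difficulty (completeness: why does a satisfying side assignment extend to an $H$-colouring?) without resolving it, and neither of your proposed resolutions works. The clause-learning loop is unsound: when the residual oracle call fails, the cause is in general a global property of an entire component (no list $K$-colouring of that component exists for the component $K$ of $H\setminus F$ it was sent to), and there is no reason this can be blamed on, or encoded as, a new $2$-clause over the side variables $p_v$; moreover, arc-consistency propagation is not a complete decision procedure, so the ``local certificate'' you hope to extract need not exist, and without a guarantee that each failed iteration yields a genuinely new valid $2$-clause, the claimed polynomial bound on the number of iterations collapses. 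Your ``cleaner alternative'' (that arc-consistency plus separability alone make every satisfying assignment extendable) is unjustified, and is precisely the gap.

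The paper closes this gap with a consistency notion you have not used: \emph{separator-consistency} (Lemmata \ref{lem:4} and \ref{lem:5}). Before running 2-SAT, it first shows that L-HOM($H\setminus F$) is polynomial-time solvable on the graph obtained from $G$ by deleting every edge $uv$ for which some $x\in L(u)$ and $y\in L(v)$ have $xy\in F$ (this is where the L-HOM($H-X-Y$) oracle is actually invoked, component by component), and then uses this to reduce the lists so that for every component $C$ of that edge-deleted graph and every component $K$ of $H\setminus F$, either $C$ admits a list $K$-colouring or $K$ is absent from all lists of $C$. With this precomputed, any satisfying 2-SAT assignment (the paper phrases the 2-SAT step as a list homomorphism to an auxiliary trigraph $H_0$ with all lists of size at most two, which is equivalent to your encoding) extends by construction, and no iteration is needed. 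You would need to add this step, or an equivalent completeness argument, for your proof to go through.
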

\begin{proof}
First, we observe if $G$ with lists $L$ is separable on $F$, then for any
reduction $L'$ of $L$, the graph $G$ with lists $L'$ is separable on $F$.
Hence, by Lemmata \ref{lem:1} and \ref{lem:2}, we may assume that the
lists $L$ are arc-consistent and contain representatives.

Next, we show that L-HOM$(H[A])$ can be solved in polynomial time on any
$G\setminuss F$ given that $G$ is separable on $F$. Indeed, consider a component
$C$ of $G\setminuss F$, and let $L'$ be lists such that $L'(u)=L(u)\cap A$ for
each $u\in V(C)$. Since $G$ is separable on $F$, we have either $L'(u)\subseteq
A\setminus X$, or $|L'(u)|=1$. Hence, if $B$ denotes the vertices with
$|L'(u)|=1$, then arc-consistency of $L$ implies that $C$ has a list
$H[A]$-colouring with respect to $L$ if and only if $C-B$ has a list
$(H-X-Y)$-colouring with respect to $L'$. Similarly, for L-HOM$(H[B])$. Hence,
L-HOM$(H\setminus F)$ can be solved in polynomial time for $G\setminuss F$ given
$G$ is separable~on~$F$.

It now follows by Lemma \ref{lem:5} that we may assume that the lists $L$ are
arc-consistent and separator-consistent on $F$. We also assume that lists $L$
are non-empty, since otherwise there is no solution.\medskip

Let $H_0$ be the trigraph constructed from $H[X\cup Y]$ by adding vertices $a$ and
$b$ with weak loops such that $a$ has a weak symmetric edge to each vertex of
$X$, and $b$ has a weak symmetric edge to each vertex~of~$Y$.

Let $L_0$ be the lists obtained from $L$ by replacing by $a$ each $z\in L(u)$ such
that $z\in A\setminus X$, and replacing by $b$ each $z\in L(u)$ such that $z\in
B\setminus Y$.  Also, let $A_0=X\cup\{a\}$ and $B_0=Y\cup \{b\}$.

Let $f_0$ be a list $H_0$-colouring of $G$ with respect to $L_0$.  Let $f$ be a
list $(H\setminus F)$-colouring of $G\setminuss F$ with respect to $L$ such that
for each $u\in V(G)$, we have $f(u)\in A$ if $f_0(u)\in A_0$, and $f(u)\in B$ if
$f_0(u)\in B_0$.  Since the lists $L$ are separator-consistent on $F$, such
a colouring can be found.

We make some observations about $f$. First, note that $f(u)=x$ if $f_0(u)=x$
where $x\in X$. Indeed, this follows since $G$ is separable on $F$, and $f_0$
respects the lists $L_0$. Similarly, $f(u)=y$ if $f_0(u)=y\in Y$.  Moreover,
$f(u)\not\in X$ if $f_0(u)=a$, and $f(u)\not\in Y$ if $f_0(u)=b$.  It now
follows that $f$ is a homomorphism of $G$ to $H$ with respect to $L$.

Finally, we observe that the lists $L_0$ are all of size at most two. Hence, the
mapping $f_0$ can be found in polynomial time by the standard reduction to
$2SAT$.  That concludes the proof.
\end{proof}

\section{Dichotomy for trigraph trees}

In this section, we prove the dichotomy for L-HOM$(H)$ for trigraph trees~$H$,
i.e., for trigraphs $H$ whose underlying graph is a tree. Partial results along these 
lines are included in \cite{geordie}; specifically, the case when the underlying graph 
of $H$ is a path is solved there.

Let $H^-$ be the digraph obtained from a trigraph $H$ by removing all edges $xy$
such that at least one of $xy, yx$ is a strong edge of $H$, and by removing all
vertices $x$ such that $xx$ is a strong loop in $H$.

\begin{theorem}\label{thm:trees}
If $H$ is a trigraph tree, then L-HOM$(H)$ is polynomially equivalent to
L-HOM$(H^-)$.
\end{theorem}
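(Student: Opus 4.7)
My plan is to establish both directions of the polynomial equivalence. The easier direction L-HOM$(H^-)\leq_p$ L-HOM$(H)$ I would obtain by a local gadget construction: given an instance $(G,L')$ of L-HOM$(H^-)$ with $L'(v)\subseteq V(H^-)$, attach to each arc of $G$ a small gadget whose list constraints force the arc image to lie in $E(H^-)$ rather than merely in $W(H)\cup S(H)$. The tree structure of $H$, and in particular the limited interaction of strong non-loop edges, makes such gadgets available.

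For the main direction L-HOM$(H)\leq_p$ L-HOM$(H^-)$, the approach is to use the tools of Section~2 to bring each instance into a form that Lemma~\ref{lem:9} can handle. First, apply Lemma~\ref{lem:3} to transform the lists so that they contain strong representatives $S_x$ for each strong loop $x$ of $H$. Since $H$ is a trigraph tree, its underlying graph contains no triangle, so no two symmetric neighbours of $x$ are adjacent in $H$; hence the connected set $S_x\setminus\{x\}$, being a subset of symmetric neighbours of $x$ in the symmetric graph of $H$, has at most one vertex. Thus $S_x$ is either $\{x\}$ or $\{x,y_x\}$ for a single, designated neighbour $y_x$. The sub-case $y_x\in X$ (two adjacent strong loops) is dispatched by branching via Lemma~\ref{lem:2} over the finitely many possibilities, so we may assume $y_x\in V(H^-)$ throughout.

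Next let $X$ denote the set of strong-loop vertices, set $A=X$, $B=V(H^-)=V(H)\setminus X$, let $F$ be the set of edges of $H$ between $A$ and $B$, and let $Y$ be the endpoints of $F$ inside $B$. I would iterate Lemma~\ref{lem:5} (arc- and separator-consistency on $F$), Lemma~\ref{lem:6} (sparse-dense-consistency on $X$ and $Y$), and Lemma~\ref{lem:7} (weak domination, applied to any $z\in B$ that weakly dominates some $y\in Y$) to achieve the separability hypothesis of Lemma~\ref{lem:9}. Each auxiliary L-HOM problem invoked along the way --- on $H[X]$, on $H-X$, on $H\setminus F$ --- either lives on a strictly smaller trigraph tree, to be handled by induction on the number of strong loops, or reduces to L-HOM$(H^-)$ via the gadget used for the reverse direction. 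Lemma~\ref{lem:9} then solves each resulting separable instance in polynomial time via a single oracle call to L-HOM$(H-X-Y)$, a sub-digraph of $H^-$.

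The principal obstacle I anticipate is the third step: achieving separability on $F$ requires careful orchestration of the lemmata so that each list containing a $Y$-vertex contains no vertex of $B\setminus Y$. The bound $|S_x|\leq 2$ afforded by the tree structure keeps the combinatorial branching polynomial, but verifying that the weak-domination hypothesis of Lemma~\ref{lem:7} is meaningfully satisfied in each case --- and thereby that the mixed-list instances can be split --- demands a case analysis on the local neighbourhoods of the strong loops of $H$ in the tree.
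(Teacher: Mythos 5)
Your overall architecture for the hard direction --- induction, strong representatives via Lemma~\ref{lem:3}, and the key observation that the tree structure forces $|S_x|\le 2$ --- agrees with the paper, and your ``easy'' direction is correct but over-engineered: no gadget is needed, since each connected component of $H^-$ is an induced subtrigraph of $H$ containing no strong edges, so one simply restricts the lists component by component. The genuine gap is exactly where you flag ``the principal obstacle'': the case of a strong loop $x$ whose unique symmetric neighbour $y$ carries a \emph{weak} loop. There, separability in the sense of Lemma~\ref{lem:9} is not attainable by any orchestration of Lemmata~\ref{lem:5}, \ref{lem:6}, \ref{lem:7}: while $S_x=\{x,y\}$ forces every list containing $x$ to be $\{x\}$ or $\{x,y\}$, a list may equal $\{y,z\}$ for some $z$ in $y$'s component $K$ of $H-x$ that is not adjacent to $x$, i.e., a vertex of $B\setminus Y$ in the notation of Lemma~\ref{lem:9}; Lemma~\ref{lem:7} only deletes the dominated vertex $x$ from such lists, never $z$, so condition (iii) of separability fails and keeps failing. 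The paper does not route this case through Lemma~\ref{lem:9} at all. It applies Lemma~\ref{lem:8} with $X=\{x\}$, $Y=\emptyset$, $Z=\{y\}$ to obtain arc- and separator-consistency on $\{x\}$, computes a list $(H-x)$-colouring $f_0$ of $G-B$ (here $B$ is the set of vertices of $G$ whose list contains $x$) that maps each relevant component of $G-B$ into $K$, and then \emph{extends} $f_0$ to $G$ by sending $u\in B$ to $x$ if $L(u)=\{x\}$ and to $y$ if $L(u)=\{x,y\}$; that this extension is a homomorphism is checked by a case analysis resting on arc-consistency and on $y$ weakly dominating $x$ on $V(K)$. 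This explicit extension argument is the missing idea, and without it the reduction does not go through.

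Two further, smaller gaps. First, strong non-loop edges $xy$ with neither endpoint a strong loop are not covered by your $F$ (which contains only edges between strong loops and the rest), and they do not disappear under your proposed induction ``on the number of strong loops''; the paper treats them as a separate case, using the representative vertices $v_x,v_y$ and arc-consistency to show the instance is separable on $\{xy\}$ or $\{xy,yx\}$ before invoking Lemma~\ref{lem:9}. Second, the two-adjacent-strong-loops case $S_x=S_y=\{x,y\}$ is not ``dispatched by branching via Lemma~\ref{lem:2}'': fixing representatives does not eliminate lists equal to $\{x,y\}$. The paper genuinely needs Lemma~\ref{lem:9} here, with $F=\{xy\}$ and the two components of $H\setminus\{xy,yx\}$ as the partition; relatedly, lumping all strong loops into one set $X$ would violate condition (i) of separability whenever two adjacent strong loops share a list, which is why the paper removes one strong loop per inductive step rather than all of them at once.
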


\begin{corollary}
If $H$ is a trigraph tree, then L-HOM$(H)$ is polynomial time solvable or
$NP$-complete.
\end{corollary}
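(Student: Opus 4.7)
The plan is to derive the corollary as an immediate consequence of Theorem \ref{thm:trees} combined with the known dichotomy for digraph list homomorphism problems. Specifically, by construction, $H^-$ is a digraph (it has no strong edges, since all edges involving strong edges have been removed, and all vertices carrying a strong loop have been deleted). Thus $H^-$ lies in the class for which Bulatov's theorem \cite{bulatov} applies, guaranteeing that L-HOM$(H^-)$ is either polynomial time solvable or $NP$-complete.

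From Theorem \ref{thm:trees}, L-HOM$(H)$ and L-HOM$(H^-)$ are polynomially equivalent, meaning there are polynomial-time reductions in both directions. Polynomial-time Turing (or many-one) equivalence preserves membership in P and preserves $NP$-completeness: if L-HOM$(H^-)$ is in P then so is L-HOM$(H)$ via the reduction from L-HOM$(H)$ to L-HOM$(H^-)$; if L-HOM$(H^-)$ is $NP$-complete then L-HOM$(H)$ is $NP$-hard via the reverse reduction, while L-HOM$(H)$ is clearly in $NP$ (a candidate colouring can be verified in polynomial time by checking the two defining conditions of a trigraph homomorphism against the lists).

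So the argument I would write is essentially one paragraph: invoke Theorem \ref{thm:trees} to reduce to the digraph case, cite Bulatov's dichotomy \cite{bulatov} for L-HOM$(H^-)$, and conclude that the dichotomy transfers back to L-HOM$(H)$ through the polynomial equivalence. There is no real obstacle here since the technical work is entirely absorbed into Theorem \ref{thm:trees}; the only thing to be slightly careful about is noting explicitly that $H^-$ is indeed a digraph (so that \cite{bulatov}'s dichotomy applies to it, rather than only the weaker quasi-polynomial bound of \cite{fulcsp} which applies to general trigraphs) and noting that L-HOM$(H)$ is in $NP$ so that $NP$-hardness upgrades to $NP$-completeness.
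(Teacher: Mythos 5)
Your proposal is correct and takes essentially the same route as the paper: the paper leaves the corollary's justification implicit, deriving it from Theorem \ref{thm:trees} together with Bulatov's dichotomy \cite{bulatov} for digraph list homomorphism problems (noted in the introduction), since $H^-$ is by construction a digraph. Your additional remarks that $H^-$ being a digraph is what makes \cite{bulatov} applicable, and that membership of L-HOM$(H)$ in $NP$ upgrades $NP$-hardness to $NP$-completeness, are exactly the right points of care.
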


\begin{proof}
Suppose that L-HOM$(H)$ is polynomial time solvable. Observe that, since the
underlying graph of $H$ is a tree, each connected component of $H^-$ is an 
induced subgraph of $H$. It follows that L-HOM$(H^-)$ is also polynomial time 
solvable.

Suppose that L-HOM$(H^-)$ is polynomial time solvable.  We prove the theorem by
induction on the size of $V(H)$.  Hence, we shall assume that for each vertex
$x$ of $H$, L-HOM($H-x$) is polynomial time solvable.

Let $G$ with lists $L$ be an instance of L-HOM$(H)$.  If $H$ contains no strong
loops or strong edges, then $H=H^-$ and there is nothing to prove.

Suppose that $H$ contains a strong loop at $x$.  By Lemmata \ref{lem:1},
\ref{lem:2}, and \ref{lem:3} we may assume that the lists $L$ are arc-consistent
and contain representatives and strong representatives.

Consider the strong representative $S_x$ of $x$. Let $B$ denote the vertices
$u\in V(G)$ with $x\in L(u)$. If $x$ has no symmetric neighbours, $S_x=\{x\}$.
Hence, $L(v)=\{x\}$ whenever $x\in L(v)$. Therefore, since the lists $L$ are
arc-consistent, $G$ admits a list $H$-colouring with respect to $L$ if and only
if $G-B$ admits a list $(H-x)$-colouring with respect to $L$.  Since
L-HOM$(H-x)$ is polynomial time solvable, the claim follows.

Now, suppose that $x$ has symmetric neighbours. Since the underlying graph
of $H$ is a tree, it follows that $S_x=\{x,y\}$ where $y$ is a symmetric
neighbour~of~$x$.

First, suppose that $y$ has no loop. We apply Lemma \ref{lem:8} for $X=\{x\}$,
$Y=\{y\}$, and $Z=\emptyset$, and we see that we may assume that the lists $L$
are arc-consistent and sparse-dense-consistent on $\{x\}$ and $\{y\}$. That is,
we have $L(u)=\{x\}$ or $L(u)=\{y\}$ or $L(u)\setminus\{x,y\}\neq\emptyset$ for
each $u\in V(G)$. In particular, because $S_x=\{x,y\}$, we have $L(u)=\{x\}$
whenever $x\in L(u)$.  Since the lists $L$ are arc-consistent, the claim follows
exactly as above.

Next, suppose that $y$ has a weak loop. Let $K$ be the component of $H-x$ to
which $y$ belongs. Observe that $y$ weakly dominates $x$ on $V(K)$. We apply
Lemma \ref{lem:8} for $X=\{x\}$, $Y=\emptyset$, and $Z=\{y\}$. This implies that
we may assume that the lists $L$ are arc-consistent and separator-consistent on
$X$. Moreover, $G-B$ admits a list $(H-x)$-colouring with respect to $L$, since
otherwise LHOM$(H)$ has no solution for $G$ with lists~$L$.

Hence, we let $f_0$ be a list $(H-x)$-colouring of $G-B$ with respect to $L$
such that $f_0$ restricted to any component $C$ of $G-B$ is a $K$-colouring as long
as a vertex of $K$ appears on the list of some vertex of $C$. Since the lists
$L$ are separator-consistent on $X$, such a mapping must exist. In fact, $f_0$ can
be constructed in polynomial time, since L-HOM$(H-x)$ is polynomial time
solvable.

We extend this mapping to $G$ and show that this yields a homomorphism. 
Let $f$ be a mapping defined as follows.

$$f(u)=\left\{\begin{array}{l@{\quad}l}
	x & {\rm if~}u\in B~{\rm and}~L(u)=\{x\}\\
	y & {\rm if~}u\in B~{\rm and}~L(u)=\{x,y\}\\
	f_0(u) & {\rm otherwise}
	\end{array}\right.
$$

Note that since $S_x=\{x,y\}$, we have for $u\in B$ either $L(u)=\{x\}$ or
$L(u)=\{x,y\}$. Hence, the mapping $f$ is well-defined.  Moreover, $f$ clearly
respects the lists $L$. We show that $f$ is also a homomorphism.

Suppose that $f(u)f(v)\not\in W(H)\cup S(H)$ for some $uv\in E(G)$.  Clearly, at
least one of $u,v$ must belong to $B$, since $f_0$ is a homomorphism.  Also, if
$L(u)=\{x\}$, or $L(v)=\{x\}$, or both $u,v\in B$, we obtain a contradiction by
arc-consistency of $L$.  Hence, we can assume, by symmetry, that $u\in B$,
$v\not\in B$, and $f(u)=y$.  Since the list $L$ are arc-consistent, there exists
$s\in L(v)$ such that $ys\in W(H)\cup S(H)$.  In particular, $s\neq x$ since
$v\not\in B$, and hence, $s$ belongs to $K$. Therefore, if $C$ is the component
of $G-B$ to which $v$ belongs, $f_0$ restricted to $C$ is a $K$-colouring. In
particular, $f(v)=f_0(v)\in V(K)$.

Now, if $xf(v)\in S(H)$, both $x$ and $y$ cannot belong to $L(u)$ by
arc-consistency. If $xf(v)\in W(H)$, then $yf(v)\in W(H)$ since $y$ weakly
dominates $x$ on $V(K)$. Therefore, $xf(v)\not\in W(H)\cup S(H)$. But, since the
lists $L$ are arc-consistent, and $L(u)=\{x,y\}$, we obtain a contradiction.

Suppose now that $f(u)f(v)\in S(H)$ for $uv\not\in E(G)$. Clearly, $u\in B$ or
$v\in B$. Again, if $L(u)=\{x\}$, or $L(v)=\{x\}$, or $u,v\in B$, we have a
contradiction. Hence, we can assume, by symmetry, that $u\in B$, $v\not\in B$,
and $f(u)=y$. Also, since $yf(v)$ is an edge and $v\not\in B$, $f(v)$ belongs to
$K$. As above, $xf(v)$ can neither be a non-edge nor a weak edge
by arc-consistency of $L$, respectively the fact that $y$ weakly dominates $x$
on $V(K)$. Therefore, $xf(v)\in S(H)$, and again, since the lists $L$ are
arc-consistent and $L(u)=\{x,y\}$, we have a contradiction.

This proves that $f$ is indeed a list $H$-colouring of $G$ with respect to $L$,
and clearly, $f$ can be constructed in polynomial time.

Next, suppose that $y$ has a strong loop. Consider the strong representative
$S_y$. If~$S_y=\{y\}$ or $S_y=\{y,x'\}$ where $x'\neq x$, it follows that
$L(u)=\{x\}$ whenever $x\in L(u)$. In this case, the claim follows, as above,
from arc-consistency of $L$. Hence, we may assume $S_x=S_y=\{x,y\}$. We observe
that $H\setminus \{xy,yx\}$ contains two components, a component $A$ which
contains $x$, and a component $B$ which contains $y$.  Therefore, if we let
$X=\{x\}$ and $Y=\{y\}$, then the list $L(v)$ of any vertex $v\in V(G)$ either
does not contain both $x$ and $y$, or $L(v)\subseteq\{x,y\}$. This shows that
$G$ is separable on $F=\{xy\}$.  In addition, L-HOM$(H-\{x,y\})$ is polynomial
time solvable. Therefore, by Lemma \ref{lem:9}, we can solve L-HOM$(H)$ for $G$
with the lists $L$ in polynomial~time.\medskip

Finally, suppose that $H$ contains a strong edge $xy$, and $yx$ is an edge of
$H$. Let $A$ be the component of $H\setminus \{xy,yx\}$ which contains $x$, and $B$ be
the component which contains $y$.  Let $X=\{x\}$ and $Y=\{y\}$. If $x$ does not
appear on the list of any vertex in $G$, then $G$ with the lists $L$ is an instance
of \mbox{L-HOM$(H-x)$} which is polynomial time solvable.  Similarly for $y$.
Hence, since the lists $L$ contain representatives, there exist vertices
$v_x,v_y\in V(G)$ such that $L(v_x)=\{x\}$ and $L(v_y)=\{y\}$.  Now, suppose
that the list $L(u)$ contains~$x$.  By arc-consistency, we must have $uv_y\in
E(G)$.  Since $y$ is not adjacent to any vertex of $A\setminus X$ and $uv_y\in
E(G)$, no vertex of $A\setminus X$ can appear on the list $L(u)$.  Similarly, if
$y\in L(u)$, no vertex of $B\setminus Y$ appears on the list $L(u)$.  This shows
that $G$ is separable on $F=\{xy,yx\}$. If $yx$ is not an edge of $H$, the same
argument shows that $G$ is separable on $F=\{xy\}$. In both cases, the claim
follows by Lemma~\ref{lem:9}.

That concludes the proof.
\end{proof}

\section{Extensions}

In this section, we extend the dichotomy from the previous section to a larger
class of trigraphs which we refer to as {\em tree-like}.  In fact, we prove the
dichotomy for this class using the same tools we used in the previous section.
We have stated these tools in a sufficiently general way to easily allow for
this extension.\bigskip

\noindent{\em Domination property}\medskip

We say that an ordering $x_1,\ldots,x_t$ of vertices is a {\em domination
ordering}, if $x_i$ weakly dominates $x_j$ whenever $i<j$.

Let $H$ be a trigraph, and suppose that $x$ is strong loop of $H$.
For a component $K$ of $H-x$, let $R_K$ denote the set of all 
symmetric neighbours of $x$ in $K$. Recall that according to our 
definitions, $y$ is a symmetric neighbour of $x$ just if both $xy$ 
and $yx$ are in $W(H)\cup S(H)$.

We say that a component $K$ satisfies {\em the domination property} if 

\begin{enumerate}[(D1)]
\item no vertex of $R_K$ has a strong loop, and
\item the vertices of $R_K$ with weak loops admit a domination ordering, each
weakly dominates $x$ on $V(K)$, and each weakly dominates every vertex of 
$R_K$ without a loop.
\end{enumerate}\medskip

\noindent{\em Matching property}\medskip

Let $F$ be the edges of $H$. We say that $F$ {\em separates $H$} if each edge
$xy\in F$ has its endpoints $x,y$ in different components of $H\setminus F$.
Thus $F$~separates $H$ if and only if there is a partition of $V(H)$ such that
$F$ is the set of all edges between different parts.

Let $F^*$ consist of all edges $xy \in F$ such that neither $xy$ nor $yx$
is a strong edge of $H$. 
We say that $F$ satisfies {\em the matching property} if
\begin{enumerate}[(M1)]
\item $F$ separates $H$,
\item for any $xy\in F^*$, both $xx$ and $yy$ are strong loops,
\item if $xy, zy\in F$, and neither $xy$ nor $zy$ is a bridge of $H$, then $xz$ is not 
a symmetric edge, and $xx$ or $zz$ is a strong loop,
\item if $xy, xz\in F$, and neither $xy$ nor $xz$ is a bridge of $H$, then $yz$ is not 
a symmetric edge, and $yy$ or $zz$ is a strong loop.
\item if $xy\in F^*$, and $xz$ and $yw$ are symmetric edges with $xz \not\in
F$, then $zw\not\in F^*$ and $xw\not\in F^*$.
\item if $xy\in F^*$, and $xz$ and $yw$ are symmetric edges with $yw\not\in F$,
then $zw\not\in F^*$ and $zy\not\in F^*$.
\end{enumerate}\medskip

\subsection{Tree-like trigraphs}~\vspace{-1.5ex}

We are now in a position to define the class ${\mathcal T}$ of {\em tree-like
trigraphs}.

\begin{enumerate}[(T1)]
\item If $H$ has no strong edges (or loops), then $H \in {\mathcal T}$

\item If $H$ contains a strong loop at $x$ such that $H-x \in {\mathcal
T}$ and each component of $H-x$ satisfies the domination property,
then $H \in {\mathcal T}$.

\item If $H$ contains a set $F$ of edges such that $H \setminus F \in {\mathcal
T}$ and the set $F$ satisfies the matching property, then $H \in {\mathcal T}$.
\end{enumerate}

\begin{theorem}\label{thm:4}
If $H$ is a tree-like trigraph, i.e., if $H\in\mathcal T$, then \mbox{L-HOM$(H)$}
is polynomially equivalent to L-HOM$(H^-)$.  In particular, \mbox{L-HOM$(H)$} is
polynomial time solvable or $NP$-complete.
\end{theorem}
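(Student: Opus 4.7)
The plan is to proceed by induction on the rule (T1)--(T3) used to conclude $H\in\mathcal{T}$, and inside each case by a secondary induction on $|V(H)|$; this mirrors the proof of Theorem~\ref{thm:trees} but uses the domination and matching properties where the tree argument used the fact that $S_x$ has at most two elements. The base case (T1) gives $H=H^-$ and nothing is to be proved. For the direction L-HOM$(H)$ polynomial $\Rightarrow$ L-HOM$(H^-)$ polynomial I would verify by the same induction that $H^-=(H-x)^-$ in case (T2) and $H^-=(H\setminus F)^-$ in case (T3); in (T3) this uses that by M1 and M2 every edge of $F$ is either strong (already absent in $H^-$) or joins two strong-loop vertices (both already absent in $H^-$). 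Thus $H^-$ is built out of pieces occurring as induced subtrigraphs of $H$, and an instance of L-HOM$(H^-)$ reduces componentwise to L-HOM$(H)$ by padding lists with the missing vertices of $H$.

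For the converse direction in the (T2) case I would follow the template of Theorem~\ref{thm:trees}. By induction L-HOM$(H-x)$ is polynomial; apply Lemmata~\ref{lem:1}--\ref{lem:3} to assume arc-consistency, representatives, and a strong representative $S_x$. Since $S_x\setminus\{x\}$ is connected in the symmetric graph of $H$, it lies in a single component $K$ of $H-x$. Partition $R_K\cap S_x$ into its loopless part $Y$ and its weak-looped part $Z$ (possible by (D1)), invoke Lemma~\ref{lem:8} with $X=\{x\}$ and the sets $Y,Z$ as above, and then iterate Lemma~\ref{lem:7} along the domination ordering of $Z$ provided by (D2), collapsing $Z$ on each list to a single dominating vertex $y^*$. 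After this collapse, the extension of an $(H-x)$-colouring of $G-B$ to an $H$-colouring of $G$ goes through exactly as in the tree proof, using that $y^*$ weakly dominates $x$ on $V(K)$ and that vertices of $Y$ behave as in the no-loop subcase of Theorem~\ref{thm:trees}.

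For the (T3) case L-HOM$(H\setminus F)$ is polynomial by induction. After applying Lemmata~\ref{lem:1}--\ref{lem:3}, I would use conditions M3--M6 to show that any remaining arc-consistent list contains at most one endpoint of $F$ on each side of any bipartition refining the partition of $V(H)$ certifying M1, so that $G$ becomes separable (in the sense of Lemma~\ref{lem:9}) on the set of $F$-edges crossing that bipartition. I would then iterate Lemma~\ref{lem:9} along the parts of the M1-partition, peeling one part off at a time and solving the residual problems L-HOM$(H-X-Y)$ by the outer induction on $|V(H)|$. The main obstacle will be this separability verification: conditions M3 and M4 are meant to handle two $F$-edges sharing an endpoint, by forcing a strong loop that the strong-representative machinery of Lemma~\ref{lem:3} can exploit to kill one of the two list-multiplicities, while M5 and M6 block the length-two symmetric-edge paths across $F^*$ that would otherwise force two distinct $F^*$-endpoints onto a common list. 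Fitting these conditions cleanly to the hypotheses of Lemma~\ref{lem:9}, together with the multi-neighbour subcase of (T2) above, is where the actual technical work lies.
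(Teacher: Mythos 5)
Your overall architecture --- structural induction on the (T1)--(T3) derivation, the observation that $H^-=(H-x)^-=(H\setminus F)^-$, and the treatment of the (T2) case via Lemmata \ref{lem:1}--\ref{lem:3}, the decomposition $S_x=\{x\}\cup Y\cup Z$, Lemma \ref{lem:8}, and weak domination along the ordering of $Z$ --- is essentially the paper's proof; the per-list dominating vertex you call $y^*$ is the paper's $\min(u)$, and your justification of the easy direction (each component of $H\setminus F$ is an induced subtrigraph of $H$ because $F$ separates $H$) is if anything more explicit than the paper's.

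The gap is exactly where you say the technical work lies: the (T3) separability verification, and you have not supplied the ideas that close it. The paper does not obtain separability directly from (M3)--(M6). It runs an inner induction on the edge set $F$ itself. First it shows, using the connectivity of $S_x\setminus\{x\}$ in the symmetric graph together with (M1), (M3), (M4), that for every strong loop $x\in V_F$ either $S_x=\{x,y\}$ with $xy\in F$, or $S_x$ lies inside a single component of $H\setminus F$. In the second case it analyses the sets $Q_x$ and $Q_y$ (the unions of the lists containing $x$, respectively $y$) and uses (M3)--(M6) plus arc-consistency to show that the only edge of $H$ between $Q_x$ and $Q_y$ is $xy$ itself; this forces $L(u)=\{x\}$ and $L(v)=\{y\}$ on every edge $uv$ of $G$ realizing $xy$, so those edges of $G$, and then the edge $xy$ of $F$, can be deleted and the inner induction applies. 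Only after all such edges are purged does every strong loop of $V_F$ satisfy $S_x=\{x,y\}$ with $xy\in F$, which is what conditions (ii)/(iii) of separability actually require for the weak edges of $F^*$; your plan has no mechanism for excluding the case where $S_x$ sits inside a component and puts $x$ together with other vertices of its own side onto a common list. Second, Lemma \ref{lem:9} is a statement about a bipartition, and ``peeling one part off at a time'' is not justified as stated: the paper instead reduces the many-component case to the two-component case by extracting a minimal $F_0\subseteq F$ with $H\setminus F_0$ disconnected (hence into exactly two components), checking that $F_0$ and $F\setminus F_0$ each inherit the matching property, and recursing. Without these two mechanisms the appeal to Lemma \ref{lem:9} does not go through.
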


\begin{proof}
We prove the theorem by structural induction on $H$.  We shall assume that $H$
is connected, otherwise we treat each component of $H$ separately.  If $H$ is a
digraph, there is nothing to prove. 

Suppose that $H$ contains a strong loop at $x$ such that $H-x\in\mathcal T$ and
each component of $H-x$ satisfies the domination property. If L-HOM$(H)$ is
polynomial time solvable, then so is L-HOM$(H-x)$. On the other hand, since
$H-x\in\mathcal T$, we shall assume, by induction, that L-HOM$(H-x)$ is
polynomial time solvable.

Let $G$ with lists $L$ be an instance of L-HOM$(H)$.  By Lemmata \ref{lem:1},
\ref{lem:2}, and \ref{lem:3}, we may assume that the lists $L$ contain
representatives and strong representatives.

Consider the strong representative $S_x$.  Let $B$ be the vertices $u\in V(G)$
with $x\in L(u)$.  It follows that there is a unique connected component $K$ of
$H-x$ such that $S_x\setminus\{x\}\subseteq K$.  Let $X=\{x\}$, let $Y$ be the
vertices of $S_x$ with no loops, and $Z$ be the vertices of $S_x$ with weak
loops. Since $K$ satisfies the domination property, we have $S_x=X\cup Y\cup Z$.
Also, $Z$ admits a domination ordering $z_1,\ldots,z_{|Z|}$, and each vertex of
$Z$ dominates every vertex of $Y$. For $u\in V(G)$, let $\min(u)$ denote the
vertex $z_i\in L(u)$ with the smallest index $i$. Now, by Lemma \ref{lem:6}, we
may assume that for each $u\in V(G)$, either $L(u)=\{x\}$, or
$L(u)=\{x,\min(u)\}$, or $x\not\in L(u)$.

Moreover, L-HOM$(H[X])$ and L-HOM$(H-X)$ are polynomial time solvable since
$X=\{x\}$, and by the inductive hypothesis, respectively.  Hence, by applying Lemma 
\ref{lem:8} to $X,Y,Z$, we see that may assume that the lists $L$ are arc-consistent, 
separator-consistent on $X$, and sparse-dense-consistent on $X$~and~$Y$.

Now, let $f_0$ be a list $(H-x)$-colouring of $G-B$ with respect to $L$, such
that $f_0$ reduced to any component $C$ of $G-B$ is a $K$-colouring as long as a
vertex of $K$ appears on the list of some vertex of $C$. Since the lists $L$ are
separator-consistent on $X$, such mapping must exist and can be found in
polynomial time.

We extend the mapping $f_0$ to a mapping $f$ as follows.
$$f(u)=\left\{\begin{array}{l@{\quad}l}
	x & {\rm if~}u\in B{\rm ~and~}L(u)=\{x\}\\
	\min(u) & {\rm if~}u\in B{\rm~and~}L(u)=\{x,\min(u)\}\\
	f_0(u) & {\rm otherwise}
	\end{array}\right.
$$
By the above remark, the mapping $f$ is well-defined. It also clearly respects
the lists $L$.

It remains to show that $f$ is a homomorphism.  The proof of this follows
exactly as in the proof of Theorem \ref{thm:trees}, and hence, we skip
further~details.\medskip 

Now, suppose that $H$ contains edges $F$ satisfying the matching property and
such that $H\setminus F\in\mathcal T$. Again, we assume that L-HOM$(H\setminus
F)$ is polynomial time solvable, and that the lists $L$ are arc-consistent,
contain representatives and strong representatives.

First, we observe that any subset $F_0$ of $F$ satisfies the matching property
in $H'=H\setminus (F\setminus F_0)$. In particular, $H'\setminus F_0=H\setminus
F\in \mathcal T$.

Therefore, we can proceed by induction on $F$.  We denote by $V_F$ the vertices
of $H$ with at least one incident edge from $F$.

We observe that if a vertex $x\in V_F$ does not appear on any list, we can
remove from $F$ all edges incident to $x$. Since there is at least one such
edge, the claim follows by induction.  Hence, we shall assume that each vertex
of $V_F$ appears on some list.

Let $x\in V_F$ be a strong loop. Consider the strong representative $S_x$.  We
claim that either $S_x=\{x,y\}$ where $xy\in F$, or $S_x$ belongs to a component
of $H\setminus F$.  To prove this, suppose that $y,y'\in S_x\setminus\{x\}$ and
$xy\in F$.  Clearly, all of $xy$, $xy'$, and $yy'$ are symmetric edges. Since $F$
separates $H$, at least one of $xy',yy'$ must be in $F$. But that contradicts
the matching property.

Now, suppose that $S_x$ belongs to a component of $H\setminus F$. Let $Q_x$
denote the union of lists of vertices $u\in V(G)$ with $x\in L(u)$. We have
$Q_x\subseteq S_x$. Since $x\in V_F$, there exists $y$ with $xy\in F$ or $yx\in F$. 

By symmetry, suppose that $xy\in F$, and suppose also that $xy$ is a weak edge.
Hence, $y$ is a strong loop.  Consider the strong representative~$S_y$.  Suppose
first that $S_y$ also belongs to a component of $H\setminus F$.  Clearly,
$S_x\cap S_y=\emptyset$.  In particular, $Q_x\cap Q_y=\emptyset$ where~$Q_y$ is
the union of lists $L(u)$ with $y\in L(u)$.  Furthermore, for $z\in Q_x\setminus
\{x\}$ and $w\in Q_y\setminus\{y\}$, we have by the matching property that
$zy,xw$ are not edges of $H$, and either $zw\in S(H)$ or $zw$ is not an edge of
$H$. In particular, if $zw\in S(H)$, then arc-consistency of $L$ implies
$z\not\in Q_x$.  Hence, it follows that the only edge in $H$ from $Q_x$ to $Q_y$
is the edge $xy$. This implies that for any $uv\in E(G)$ with $x\in L(u)$ and
$y\in L(v)$, we have $L(u)=\{x\}$ and $L(v)=\{y\}$ by arc-consistency of $L$.
Therefore, we can remove all such edges $uv$, and after that, we can remove $xy$
from $F$. The claim now follows~by~induction.

Next, suppose that $Q_y=\{y,w\}$ where $wy\in F$. Let $z\in Q_x\setminus\{x\}$.
By the matching property, we have $zy\not\in W(H)\cup S(H)$. Also, either $zw\in
S(H)$ or $zw$ is not an edge, and either $xw\in S(H)$ or $xw$ is not an edge. If
$zw\in S(H)$, then arc-consistency of $L$ implies $w\not\in Q_y$. If $xw\in
S(H)$ and $zw$ is not an edge, then arc-consistency of $L$ implies $z\not\in
Q_x$. Again, the only edge from $Q_x$ to $Q_y$ is the edge $xy$, and the claim
follows by induction.\medskip

Therefore, we may assume that any strong loop $x\in V_F$ is either incident to a
strong edge of $F$, or we have $S_x=\{x,y\}$ where $xy\in F$.

Recall that $F$ separates $H$.  First, suppose that $H\setminus F$ contains
exactly two components. We prove that $G$ is separable on $F$. Suppose that we
have $x\in L(u)$ where $x\in V_F$. Let $K$ be the component of $H\setminus F$
which contains $x$. Suppose that there exists $y$ such that $xy\in F$ or $yx\in
F$ and one of $xy,yx$ is strong. Then, by the matching property and
arc-consistency of $L$, the list $L(u)$ contains no vertex of $K$ other than
$x$. If no such $y$ exists, then $xx$ must be a strong loop. Hence, by the above
assumption, we have $S_x=\{x,y\}$ where $xy\in F$.  Therefore, $L(u)$ again
contains no vertex of $K$ other than $x$.  The claim now follows by Lemma
\ref{lem:9}.

Finally, suppose that $H\setminus F$ contains more than two components.  Let
$F_0$ be a smallest subset of $F$ such that $H\setminus F_0$ is disconnected.
Since $H$ is connected and $H\setminus F$ is disconnected, the set $F_0$ must
exist.  As remarked above, $F\setminus F_0$ satisfies the matching property in
$H\setminus F_0$.  Also, it can be seen that $F_0$ satisfies the matching
property in $H$.  Since $F_0$ contains at least one edge and $H\setminus F_0$
contains exactly two components because of minimality, the claim follows
by~induction.

That concludes the proof.
\end{proof}

The class ${\mathcal T}$ admits a few natural extensions which we
shall only mention tangentially. For instance, we shall observe the
following simple fact.

\begin{theorem}\label{thm:simple}
If each vertex of $H$ has a strong loop, and the symmetric graph of $H$ contains
no triangles, then L-HOM$(H)$ is polynomial time solvable.
\end{theorem}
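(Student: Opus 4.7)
The plan is to apply the preprocessing Lemmata \ref{lem:1}, \ref{lem:2}, and \ref{lem:3} so that we may assume the lists $L$ are arc-consistent and contain representatives and strong representatives. Since every vertex of $H$ carries a strong loop, the set $S_x$ from the definition of strong representatives is available for every $x\in V(H)$; in particular, for each $v\in V(G)$ and each $x\in L(v)$ we have $L(v)\subseteq S_x$, and every element of $S_x\setminus\{x\}$ is a symmetric neighbour of $x$.

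The first key step is to establish that $|L(v)|\le 2$ for every $v\in V(G)$. Suppose, towards a contradiction, that three distinct vertices $x,y,z$ all lie in $L(v)$. From $x\in L(v)$ we get $y,z\in S_x$, so both $y$ and $z$ are symmetric neighbours of $x$; applying the same observation to $y$ (which also has a strong loop) gives $z\in S_y$, so $z$ is a symmetric neighbour of $y$ as well. Hence $\{x,y,z\}$ induces a triangle in the symmetric graph of $H$, contradicting the triangle-freeness hypothesis.

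The second step is a standard reduction to 2-SAT, entirely analogous to the one used at the end of the proof of Lemma \ref{lem:9}. For each $v\in V(G)$ we introduce one Boolean variable whose two truth values correspond to the (at most two) elements of $L(v)$, with unit clauses fixing the value when $|L(v)|=1$. For every ordered pair $u\ne v$ and every $(a,b)\in L(u)\times L(v)$ that violates the homomorphism condition --- i.e.\ $ab\notin W(H)\cup S(H)$ when $uv\in E(G)$, or $ab\in S(H)$ when $uv\notin E(G)$ --- we add the binary clause forbidding the joint assignment $u\mapsto a$, $v\mapsto b$. Satisfying assignments of the resulting 2-SAT instance correspond bijectively to list $H$-colourings of $G$ with respect to $L$, and 2-SAT is solvable in polynomial time.

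The main technical step is the size bound $|L(v)|\le 2$; the rest is routine. The value of the theorem lies in observing that the strong-representative machinery of Lemma \ref{lem:3}, combined with the global assumption that every vertex has a strong loop, forces every list to live inside a clique of the symmetric graph, so that triangle-freeness is precisely the right hypothesis to cap list sizes at two.
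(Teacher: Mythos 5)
Your proof is correct and follows essentially the same route as the paper: invoke Lemma \ref{lem:3} to get strong representatives, use triangle-freeness of the symmetric graph to cap the lists at two elements, and finish by a reduction to 2-SAT. The only (immaterial) difference is that you bound $|L(v)|\le 2$ directly from conditions (i) and (ii) of strong representatives applied to two list elements, whereas the paper bounds $|S_x|\le 2$ using the connectivity condition (iii); both arguments are valid and yield the same conclusion.
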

\begin{proof}
Let $G$ with lists $L$ be an instance of L-HOM$(H)$.  By Lemma \ref{lem:3}, we
may assume that the lists $L$ contain strong representatives. Since the
symmetric graph of $H$ has no triangles, it follows that for each $x\in V(H)$,
the strong representative $S_x$ contains at most two elements. The problem now
can be reduced to $2SAT$ which is polynomial time solvable.
\end{proof}

As a consequence, we can extend the class ${\mathcal T}$ by adding another basis
clause to its recursive description:

\begin{enumerate}[(T1$^\prime$)]
\item
If each vertex of $H$ has a strong loop, and the symmetric graph of $H$ contains
no triangles, then $H \in {\mathcal T}$.
\end{enumerate}

Similarly, if $H$ has no weak edges, i.e., if $W(H)=\emptyset$, then L-HOM$(H)$
is polynomially solvable \cite{fulcsp,fhkm}.  Hence, we can also add the
following clause:

\begin{enumerate}[(T2$^\prime$)]
\item
If $H$ has no weak edges, then $H \in {\mathcal T}$.
\end{enumerate}\medskip

\subsection{Trigraph trees and special tree-like trigraphs}~\vspace{-1.5ex}

Since the recursive description of the class ${\mathcal T}$ is complex,
we shall identify a subclass of ${\mathcal T}$ which can be defined directly.

Let $F(H)$ denote all edges $xy$ of $H$ such that either 
\begin{enumerate}[(i)]
\item
$xy$ or $yx$ is a strong edge of $H$, or 
\item
$xy$ and $yx$ are weak edges of $H$, and $x$ and $y$ are strong loops of $H$.
\end{enumerate}

We say that $H$ is a {\em special tree-like trigraph} if
there is a set $F'\supseteq F(H)$ of edges of $H$ such that
\begin{enumerate}[(S1)]
\item $F'$ satisfies the matching property, and
\item for every strong loop $x$ of $H\setminus F'$, each component $K$ of
$H\setminus F'-x$ satisfies the domination property in $H\setminus F'$.
\end{enumerate}

Let $\mathcal S$ denote the class of all special tree-like trigraphs. Also, let
$\mathcal T_0$ denote the class of all trigraph trees.

\begin{theorem}
$\mathcal T_0\subseteq \mathcal S\subseteq \mathcal T$.
\end{theorem}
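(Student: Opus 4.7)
The plan is to prove the two inclusions separately.

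For $\mathcal T_0\subseteq \mathcal S$, I would take $F'=F(H)$ as the witness for a trigraph tree $H$. The matching conditions in (S1) are essentially free for trees. Condition (M1) holds because every non-loop edge of a tree is a bridge, so removing any set of edges places the endpoints of each removed edge in distinct components. Condition (M2) is automatic from the definition of $F^*$ together with clause (ii) of $F(H)$. Conditions (M3) and (M4) are vacuous since their hypothesis ``neither edge is a bridge'' is never met in a tree. Conditions (M5) and (M6) follow by a short case analysis, since the forbidden three-symmetric-edge configurations would close a cycle in the underlying tree, and the degenerate cases $x=w$ or $z=w$ reduce to loops, which never lie in $F^*$. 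For (S2), the tree structure forces $|R_K|\le 1$ for every component $K$ of $(H\setminus F')-x$. Property (D1) is forced: a hypothetical strong-loop neighbour $y$ of $x$ in $H\setminus F'$ would, by clause (ii) of $F(H)$, have the weak edge $xy$ placed in $F(H)\subseteq F'$, contradicting that $y$ is a neighbour of $x$ in $H\setminus F'$. Property (D2) collapses to an at-most-singleton case in which the ordering is trivial, weak domination of $x$ on $V(K)$ is vacuous for $z\ne y$ and witnessed for $z=y$ by $y$'s weak loop, and there are no loopless neighbours in $R_K$ to dominate.

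For $\mathcal S\subseteq \mathcal T$, given $H\in\mathcal S$ witnessed by $F'$, I would construct a derivation of $H$ in $\mathcal T$ in two stages. Stage~1 puts $H_0 := H\setminus F'$ into $\mathcal T$. Since $F'\supseteq F(H)$ contains all non-loop strong edges, the only strong edges surviving in $H_0$ are the strong loops $x_1,\ldots,x_t$ of $H$. Setting $H_i := H_0-\{x_1,\ldots,x_i\}$, the graph $H_t$ has no strong edges at all, so $H_t\in\mathcal T$ by (T1). I would then reinsert the loops one at a time via (T2): at step $i$, the required domination property for $x_i$ in $H_{i-1}$ is inherited from the domination of $x_i$ in $H_0$ supplied by (S2), via the following monotonicity argument. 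Each component $K'$ of $H_{i-1}-x_i=H_i$ lies inside some component $K$ of $H_0-x_i$, so $R_{K'}=R_K\cap V(K')$; since $H_{i-1}$ is a vertex-induced subgraph of $H_0$, both (D1) (absence of strong loops in $R_{K'}\subseteq R_K$) and (D2) (the restricted domination ordering; weak domination on $V(K')\subseteq V(K)$; weak domination of loopless neighbours of $x_i$) transfer from $K$ in $H_0$ to $K'$ in $H_{i-1}$. Stage~2 is then a single application of (T3) with $F=F'$, whose matching property in $H$ is precisely (S1).

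The main technical point is the monotonicity step of Stage~1: one must verify that weak dominance and loop status survive passage from $H_0$ to the vertex-induced subgraph $H_{i-1}$, and that the domination ordering of the weak-loop vertices in $R_K$ restricts to a valid ordering of those in $R_{K'}$. These facts are routine once one observes that $H_{i-1}$ and $H_0$ agree on every edge between vertices they both retain; the only subtlety is keeping straight the distinction between ``in $H_0$'' and ``in the current subgraph $H_{i-1}$''.
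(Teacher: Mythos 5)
Your proof is correct and follows essentially the same route as the paper: it takes $F'=F(H)$, uses that every edge of a tree is a bridge (so (M1), (M3), (M4) hold and the cycle-closing configurations of (M5), (M6) cannot occur) together with the fact that each component of $H-x$ contains at most one neighbour of $x$, and then derives membership in $\mathcal T$ by one application of (T3) followed by repeated applications of (T2) down to a digraph covered by (T1). The only difference is that you explicitly verify the monotonicity of the domination property under removal of the other strong loops, a step the paper's proof passes over in silence.
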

\begin{proof}
First, let $H$ be in $\mathcal T_0$. Consider the set $F'=F(H)$.  Since each
edge in $F'$ is a bridge of $H$, conditions (M1), (M3) and (M4) are satisfied for $F'$.
Also, the edges of $F'$ do not form cycles in $H$ since $H$ is a trigraph tree,
and hence, (M5) and (M6) are satisfied. Therefore, $F'$ satisfies the matching
property. On the other hand, for every vertex $x$ of $H$ with a strong loop,
each component of $H-x$ satisfies the domination property, since $x$ is adjacent
to at most one vertex of this component. It follows that~$H\in\mathcal S$.

Now, for $H\in\mathcal S$, it suffices to observe that edges $xy$ of $F'\setminus
F(H)$ have both $x$ and $y$ strong loops. Hence, after removing $F'$ using (T3)
and then removing all vertices with strong loops using (T2), we obtain precisely
$H^-$ which is a digraph. Hence, by (T1), we conclude that $H\in \mathcal T$.
\end{proof}

\begin{corollary}
If $H$ is a special tree-like trigraph, i.e., if $H\in\mathcal S$, then
\mbox{L-HOM$(H)$} is polynomial time solvable or $NP$-complete.
\end{corollary}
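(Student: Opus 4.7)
The plan is to deduce this corollary directly from the preceding theorem $\mathcal T_0\subseteq\mathcal S\subseteq\mathcal T$ together with Theorem \ref{thm:4} and the known dichotomy for digraph list homomorphism problems. The idea is that every special tree-like trigraph has already been shown to be tree-like, so we inherit the reduction to a digraph problem, where dichotomy is available from prior work.

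First, I would invoke the inclusion $\mathcal S\subseteq\mathcal T$ proved just above to conclude that any $H\in\mathcal S$ lies in $\mathcal T$. Then Theorem \ref{thm:4} applies directly and tells us that L-HOM$(H)$ is polynomially equivalent to L-HOM$(H^-)$, where $H^-$ is obtained from $H$ by deleting the vertices with strong loops and the remaining strong edges. The key point is that $H^-$ is by construction a digraph (it contains only weak edges and no strong loops), so it falls into the scope of Bulatov's theorem \cite{bulatov}, which guarantees that for every digraph $D$, the problem L-HOM$(D)$ is either polynomial time solvable or $NP$-complete. Thus L-HOM$(H^-)$ obeys this dichotomy, and transporting it back through the polynomial equivalence furnished by Theorem \ref{thm:4} yields the same dichotomy for L-HOM$(H)$.

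There is essentially no obstacle here beyond correctly citing what has already been established: the heavy lifting was done in the proof of Theorem \ref{thm:4} (which showed that membership in $\mathcal T$ suffices to reduce to $H^-$) and in the proof that $\mathcal S\subseteq\mathcal T$ (which showed that the single explicit choice of $F'$ in the definition of $\mathcal S$ is enough to unroll the recursive description of $\mathcal T$). The only new ingredient required in this corollary is the observation that $H^-$ is always a digraph, so that the general CSP dichotomy result of \cite{bulatov} is applicable; this is immediate from the definition of $H^-$. Hence the proof I would write is short: cite $\mathcal S\subseteq\mathcal T$, apply Theorem \ref{thm:4} to reduce L-HOM$(H)$ to L-HOM$(H^-)$, note that $H^-$ is a digraph, and invoke \cite{bulatov} to finish.
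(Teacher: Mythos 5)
Your proposal is correct and matches the paper's intended argument exactly: the corollary is an immediate consequence of the inclusion $\mathcal S\subseteq\mathcal T$ and Theorem \ref{thm:4}, whose dichotomy conclusion already rests on the Bulatov result \cite{bulatov} for the digraph $H^-$, just as you describe. The paper leaves the corollary without an explicit proof precisely because this chain of citations is all that is needed.
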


Although some simple extensions of tree-like, or special tree-like trigraphs are
possible (such as, say, Theorem \ref{thm:simple}), we shall mention in the
conclusions some example trigraphs outside ${\mathcal T}$ for which the first
part of Theorem \ref{thm:4} fails.\medskip

\subsection{Representatives of strong edges}~\vspace{-1.5ex}

In this section, we describe an extension of the notion of strong
representatives to strong edges of trigraphs.

The {\em underlying trigraph $H'$ of $H$} is the trigraph on the vertices of $H$
with strong edges $xy$ such that $xy\in S(H)$ or $yx\in S(H)$, and with weak
edges $xy$ such that $xy,yx\not\in S(H)$ and $xy\in W(H)$ or $yx\in W(H)$.

We denote by $G^2$ the digraph on the vertices of $G$ with edges $xy$ such that
$xy\in E(G)$ or $xz,zy\in E(G)$ for some $z\in V(G)$.

We denote by $H^2$ the trigraph on the vertices of $H$ with strong edges $xy$
such that $xy\in S(H)$ or $xz,zy\in S(H)$ for some $z\in V(G)$, and with weak
edges $xy$ such that $xy\not\in S(H^2)$, and $xy\in W(H)$ or $xz,zy\in W(H)\cup
S(H)$ for some $z\in V(G)$.

We say that an edge $xy$ of $H$ is {\em admissible}, if there exist vertices
$u,v\in V(G)$ such that $x\in L(u)$ and $y\in L(v)$.

We say that lists $L$ contain {\em representatives for strong edges}, if for
each admissible strong edge $xy$ in $H$, there is a set $S_{xy}$ such that

\begin{enumerate}[(i)]
\item $L(v)\subseteq S_{xy}$ whenever $x\in L(v)$ or $y\in L(v)$,
\item each vertex of $S_{xy}\setminus \{x,y\}$ is a neighbour of $x$ or a
neighbour~of~$y$,
\item if $x$ and $y$ have no common neighbours, then $S_{xy}\backslash\{x,y\}$
contains only neighbours of $x$ or only neighbours of $y$,
\item if $yx$ is a strong edge, then each vertex of $S_{xy}\backslash \{x,y\}$ 
is a symmetric neighbour of $x$ or a symmetric neighbour of $y$,
\item if $yx$ is a strong edge, and $x$ and $y$ have no common symmetric neighbours, then 
\mbox{$S_{xy}\backslash\{x,y\}$} contains only symmetric neighbours of $x$ or only
symmetric \mbox{neighbours}~of~$y$.
\end{enumerate}

\begin{lemma}\label{lem:10}
Lists $L$ can be transformed to contain representatives for strong edges.
\end{lemma}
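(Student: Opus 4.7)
The plan is to mirror the structure of the proof of Lemma \ref{lem:3}. First, I would apply Lemmata \ref{lem:1}, \ref{lem:2}, and \ref{lem:3} so that the lists $L$ are arc-consistent and contain both representatives and strong representatives. The main task is then to process each admissible strong edge $xy$ of $H$ in turn, producing a polynomial family of sub-instances, in each of which a suitable set $S_{xy}$ for $xy$ has been installed. Because there are only a constant number of admissible strong edges and the representative and strong-representative properties persist under non-empty reductions (as remarked after Lemma \ref{lem:3}), iterating over all such edges multiplies the family only by a polynomial factor.

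For a fixed admissible strong edge $xy$, let $B=\{u\in V(G) : L(u)\cap\{x,y\}\ne\emptyset\}$. Applying arc-consistency against the singleton-list representatives $v_x,v_y$ of $x$ and $y$ shows that for every $u\in B$ we have $L(u)\subseteq\{x,y\}\cup N$, where $N$ is the union of the neighbours of $x$ and of $y$ (symmetric neighbours when $yx\in S(H)$). This already secures condition (ii), and condition (iv) in the symmetric case; the work lies in enforcing (i), (iii), and, in the symmetric case, (v). For a hypothetical list $H$-colouring $f$, the preimages $C_x=f^{-1}(x)$ and $C_y=f^{-1}(y)$ satisfy: every $(u,v)\in C_x\times C_y$ yields a directed edge of $G$ because $xy\in S(H)$, and if $yx\in S(H)$ the reverse edge is also forced, so that $C_x\cup C_y$ carries a complete symmetric bipartite structure.

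Imitating the chordal-completion argument of Lemma \ref{lem:3}, I would build an auxiliary graph $G^*$ on $B$ in which $C_x\cup C_y$ is a clique (the symmetric graph of $G[B]$ in the symmetric case, suitably augmented in the one-sided case), form a minimal chordal completion $G''$, and enumerate maximal cliques $C''$ of $G''$. In parallel, whenever $x$ and $y$ have no common (respectively symmetric) neighbours, enumerate the binary ``$x$-side'' versus ``$y$-side'' choice needed to force (iii) and (v). For each such pair $(C'',\sigma)$, restrict $L(u)$ to $\{x,y\}\cup N_\sigma$ for $u\in C''$ (with $N_\sigma$ selected by the side choice), and remove $x,y$ from $L(u)$ for $u\notin C''$. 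Proposition \ref{prop:tarjan}, applied to $G^*$ with the clique $C_x\cup C_y$, forces the ``overflow'' $C''\setminus(C_x\cup C_y)$ into a single component of $G^*-C_x-C_y$, whose image under $f$ is in turn pinned to a single component of the trigraph induced on $N$, thereby aligning with a chosen side. Setting $S_{xy}=\{x,y\}\cup\bigcup_{u\in C''}L(u)$ then satisfies (i)--(v).

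The main obstacle is adapting the chordal-completion argument of Lemma \ref{lem:3}, which relied on $f^{-1}(x)$ being a single symmetric clique, to the bipartite object $C_x\cup C_y$. Concretely, one must choose $G^*$ so that $C_x\cup C_y$ is always a clique of $G^*$---a delicate case analysis depending on whether $xx$ and $yy$ are strong, weak, or non-loops, and on whether only one of $xy,yx$ is strong---and then verify that Proposition \ref{prop:tarjan} still identifies a unique side into which the overflow vertices fall, so that (iii) and (v) are actually forced and not merely permitted. Once this is done, the polynomial bound is routine: a chordal graph has at most $n$ maximal cliques \cite{fulkersongross}, the side choice is a constant factor, and the at most $|V(H)|^2$ admissible strong edges contribute only a constant exponent, so the total family of transformed instances is polynomial in $n$.
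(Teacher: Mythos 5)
Your skeleton matches the paper's (reduce to arc-consistent lists with representatives, exhibit a clique containing the relevant preimage, take a minimal chordal completion, enumerate its at most $n$ maximal cliques via \cite{fulkersongross}, and invoke Proposition \ref{prop:tarjan} to pin the overflow vertices to one component on the $H$ side), but the step you flag as ``the main obstacle'' is precisely the crux of the proof, and you leave it unresolved. Your candidate $G^*$, ``the symmetric graph of $G[B]$, suitably augmented,'' does not work: two vertices of $C_x=f^{-1}(x)$ need not be adjacent in $G$ at all unless $xx$ is a strong loop, and since $C_x$ depends on the unknown colouring $f$ you cannot augment $G[B]$ by hand to make it a clique. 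The paper's device is to pass to the \emph{square}: in $(G'[B])^2$ (where $G'$ is the underlying graph) any two vertices of $C_x$ are adjacent because they share a neighbour in $C_y$, any two of $C_y$ share a neighbour in $C_x$, and the cross pairs are adjacent outright, so $C=C_x\cup C_y$ is a clique of a graph defined independently of $f$. This single move replaces the ``delicate case analysis depending on whether $xx$ and $yy$ are strong, weak, or non-loops'' that you defer.

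There is a second, related gap on the target side that you do not address at all. To conclude that $C''\setminus C$ maps into a single component $K$, you need $f$ to be a homomorphism from the auxiliary graph on $B$ to some structure on $N$ whose components are the objects you enumerate; components of $H'[N]-\{x,y\}$ are the wrong objects, since $f$ restricted to $(G'[B])^2$ is not a homomorphism to $H'[N]-\{x,y\}$. The paper squares the target too, showing $f$ is a homomorphism of $(G'[B])^2$ to $(H'[N_0])^2$ where $N_0=f(B)$ --- and this step genuinely uses surjectivity of $f$ onto $N_0$. Squaring the target is also what delivers condition (iii): the neighbours of $x$ form a clique in the square (via $x$), likewise for $y$, so when $x$ and $y$ have no common neighbours each component $K$ of $(H'[N_0])^2-\{x,y\}$ lies entirely on one side, and (iii)/(v) come out of the enumeration over components $K$ rather than from a separately imposed binary side choice as you propose. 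The symmetric case (both $xy,yx\in S(H)$) is then handled exactly as you suggest, by rerunning the argument on the symmetric graph of $G$ and the symmetric trigraph of $H$. Without the squaring idea on both sides, your proof does not go through.
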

\begin{proof}
By Lemmata \ref{lem:1} and \ref{lem:2}, we may assume that the lists $L$ are
arc-consistent and contain representatives.

Let $f$ be a list $H$-colouring of $G$ with respect to $L$, and let $xy$ be an
admissible strong edge of $H$. Let $C=f^{-1}(x)\cup f^{-1}(y)$, let $N$ denote
all vertices of $V(H)$ that are neighbours of $x$ or $y$, and let $B$ denote all
vertices $u\in V(G)$ with $x\in L(u)$ or $y\in L(u)$.  We have $C\subseteq B$.
Also, since the lists $L$ are arc-consistent, we have $L(u)\subseteq N$ for each
$u\in B$.

Let $G'$ be the underlying graph of $G$ and $H'$ be the underlying trigraph of
$H$. It is easy to verify that $f$ is a homomorphism of $G'$ to $H'$.  In
particular, $f$ is a surjective mapping from of $B$ to $N_0$, where $N_0=f(B)$.
Clearly, $N_0\subseteq N$.

Now, it is not difficult to prove that $f$ is a homomorphism of $(G'[B])^2$ to
$(H'[N_0])^2$. (For this, we need the above remark about surjectivity.) We
observe that $C$ induces a clique in $(G'[B])^2$. Hence, using the same argument
as in the proof of Lemma \ref{lem:3}, there is a maximal clique $C''$ of a
minimal chordal completion $G''$ of $(G'[B])^2$ such that for each $u\in
C''\setminus C$, we have $f(u)\in V(K)$ where $K$ is a unique component of
$(H'[N_0])^2-\{x,y\}$. In particular, if $x$ and $y$ have no common neighbours,
then $V(K)$ either contains only neighbours of $x$ or it contains only
neighbours of $y$. Hence, we let $S_{xy}=V(K)\cup\{x,y\}$.

Now, if $yx$ is also a strong edge, we obtain $S_{xy}$ by replacing $G$
with the symmetric graph of $G$, and replacing $H$ with the symmetric trigraph
of $H$. The remainder of the proof follows exactly as in Lemma~\ref{lem:3}.
\end{proof}

\subsection{Trigraph cycles}~\vspace{-1.5ex}

A trigraph $H$ is a {\em trigraph cycle} if the underlying graph of $H$ is a
cycle.  Let $H$ be a trigraph cycle. We say that $H$ is a {\em good cycle} if
$H$ has at least one of the following:

\begin{enumerate}[(i)]
\item two strong edges, or
\item three consecutive strong loops, or
\item two pairs of consecutive strong loops, or
\item a strong edge and a distinct pair of consecutive strong loops, or
\item two strong loops joined by a nonsymmetric edge, or
\item a strong loop whose neighbours have no loops, or
\item a strong loop with non-symmetric edges to neighbours, or
\item a strong edge with at least one endpoint having no loop.
\end{enumerate}

\begin{theorem}
If $H$ is a good cycle, then the problem L-HOM$(H)$ is polynomial time solvable
or $NP$-complete.
\end{theorem}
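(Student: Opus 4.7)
My plan follows the same template as the proofs of Theorems \ref{thm:trees} and \ref{thm:4}: show that for any good cycle $H$, the problem L-HOM$(H)$ is polynomially equivalent to the digraph problem L-HOM$(H^-)$, so that dichotomy follows from Bulatov's dichotomy \cite{bulatov} for digraph list homomorphism problems. The easy direction of the equivalence, namely that tractability of L-HOM$(H)$ implies tractability of L-HOM$(H^-)$, is handled exactly as in Theorem \ref{thm:trees}, once one observes that after deleting strong edges and strongly-looped vertices from the cycle $H$ one obtains a disjoint union of directed paths, each of which is an induced subgraph of $H$.

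For the hard direction, I would first preprocess an instance $(G,L)$ by applying Lemmata \ref{lem:1}, \ref{lem:2}, \ref{lem:3} and \ref{lem:10} so that the lists are arc-consistent and contain representatives, strong representatives, and representatives for strong edges; this produces a polynomial-size family of instances, one of which is solvable iff the original is. Then I would proceed by case analysis on which of the eight conditions (i)--(viii) of goodness is satisfied. In each case the goal is to exhibit a set $F$ of at most two edges of $H$ such that, after possibly further transformations, the instance becomes separable on $F$ in the sense of Lemma \ref{lem:9}, and $H\setminus F$ is a trigraph forest (for which L-HOM is polynomial time solvable by Theorem \ref{thm:trees}, whenever L-HOM$(H^-)$ is).

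For the strong-edge cases (i), (iv), (v), (viii), I would take $F$ to be the one or two strong (or non-symmetric) edges, whose removal turns the cycle into a union of trigraph paths; near such an edge the strong-edge representatives from Lemma \ref{lem:10} together with arc-consistency force the matching property and hence separability on $F$. For the strong-loop cases (ii), (iii), (vi), (vii), each strong loop $x$ on the cycle has at most two symmetric neighbours in $H$, so by Lemma \ref{lem:3} its strong representative $S_x$ has at most three vertices; the goodness hypothesis further restricts $S_x$ (missing loops at neighbours in (vi), non-symmetry in (vii), interaction with a second strong loop in (ii), (iii)), and the case-by-case argument of Theorem \ref{thm:trees} at a strong loop can then be carried out at $x$, with the two cycle edges incident to $x$ playing the role of $F$ in Lemma \ref{lem:9}; Lemmata \ref{lem:6}, \ref{lem:7} and \ref{lem:8} are used along the way to enforce sparse-dense-consistency and to eliminate weakly-dominated alternatives.

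The main obstacle will be case (iii), two disjoint pairs of consecutive strong loops: no single strong loop on its own suffices to break the cycle, and the two pairs must be handled together, either by iterating the strong-loop branch of Theorem \ref{thm:trees} at both pairs or by choosing $F$ to consist of edges drawn from both regions simultaneously. Careful verification of the matching property (M1)--(M6) for such an $F$, and of the separability condition of Lemma \ref{lem:9} after the successive transformations (which may require re-establishing arc-consistency between steps), is the delicate bookkeeping that I expect to occupy most of the proof.
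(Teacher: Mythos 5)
Your proposal follows essentially the same route as the paper: preprocess with Lemmata \ref{lem:1}, \ref{lem:2}, \ref{lem:3} and \ref{lem:10}, then case-split on the eight goodness conditions, in each case either using arc-consistency and the (strong/strong-edge) representatives to force all lists containing the offending vertex to become singletons --- so the instance reduces to $H-x$ or $H-y$, a trigraph tree handled by Theorem \ref{thm:trees} --- or exhibiting a one- or two-edge set $F$ with the matching property so that Theorem \ref{thm:4} applies. Two caveats: the case you flag as the main obstacle (two disjoint pairs of consecutive strong loops) is actually the most immediate one --- taking $F$ to consist of one edge from within each pair already satisfies the matching property, which is the second of the two options you mention; and your claim that each component of $H^-$ is an induced subgraph of $H$, used for the easy direction of an equivalence with L-HOM$(H^-)$ that the theorem does not in fact assert, fails for cycles (e.g.\ a cycle with exactly one strong edge: deleting it leaves a spanning path that is not induced), though the dichotomy conclusion does not need that direction.
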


Figure \ref{fig:cycle1} illustrates example trigraph cycles whose complexity is
not determined by our theorem. These are cycles $H$ that contain vertices $x,y$
such that either $xy\in S(H)$ and $xx,yy\in W(H)$, or all of $xx$, $xy$, $yx$,
and $yy$ are edges (weak or strong) but at least one is strong. (Only three
typical cases of this are shown in the figure.)

We prove the theorem by reducing the problem to an induced subgraph of $H$.
Unfortunately, if $H$ is a trigraph cycle that is not a good cycle, such
reduction may not be possible at all. The~complement of the stable cutset
problem (Figure \ref{fig:ex1}c) is a good example illustrating
this~difficulty.\medskip

\begin{proof}
We assume that $H$ has at least five vertices, since otherwise the claim follows
from \cite{cehm}.  Let $G$ with lists $L$ be an instance of L-HOM$(H)$. By
Lemmata \ref{lem:1}, \ref{lem:2}, \ref{lem:3}, and \ref{lem:10}, we assume that
the lists $L$ are arc-consistent, contain representatives, strong
representatives, and representatives for strong edges.  Also, we assume that
each vertex $x$ of $H$ appears on some list, since otherwise we can reduce the
problem $H-x$, which is a trigraph tree, and the claim follows by Theorem
\ref{thm:trees}.

Suppose that $H$ contains two strong edges $e$, $e'$. If $F=\{e,e'\}$ satisfies
the matching property, then we are done by Theorem \ref{thm:4}. Otherwise,
we must have $e=xy$ and $e'=zy$. (The case $e=yx$ and $e'=yz$ is symmetric.) Let
$t$ be any vertex (including $x,z$) of $H$ other than $y$.  By arc-consistency
of $L$, no list $L(u)$ contains both $y$ and $t$ since $xy$ and $zy$ are strong,
but at least one of $xt,zt$ is not an edge.  Hence, we have $L(u)=\{y\}$
whenever $y\in L(u)$. By arc-consistency of $L$, we can reduce the problem to
$H-y$ and we are done.

Next, if $H$ contains two pairs of consecutive strong loops $x,y$ and $z,w$ with
possibly $y=z$, or a strong edge $xy$ and a pair of strong loops $z,w$ where
$\{z,w\}\neq\{x,y\}$, then $F=\{xy,zw\}$ satisfies the matching property, and
again we are done.

Hence, suppose that $H$ contains strong loops $x,y$ where $xy$ is an edge, but
$yx$ is not. Consider the strong representatives $S_x$ and $S_y$. Clearly, $S_x$
and $S_y$ are disjoint. Also, the only edge from $S_x$ to $S_y$ is the edge
$xy$. Hence, by arc-consistency of $L$, if $uv\in E(G)$ with $x\in L(u)$ and
$y\in L(v)$, we have $L(u)=\{x\}$ and $L(v)=\{y\}$. In particular, we can remove
all such edges $uv$, and afterwards, we remove the edge $xy$ from $H$. Now,
since $H\setminus xy$ is a trigraph tree, by Theorem \ref{thm:trees}, we
conclude that L-HOM$(H\setminus xy)$ is polynomially equivalent to L-HOM$(H-x)$.
Hence, we are done.

Next, suppose that $H$ contains a strong loop $x$ with neighbours $y,y'$ having
no loops. Consider the strong representative $S_x$. If $S_x=\{x,y\}$, then we
apply Lemma \ref{lem:8} for $X=\{x\}$, $Y=\{y\}$, and $Z=\emptyset$. This yields
that the lists $L$ are arc-consistent and sparse-dense-consistent on $X$ and
$Y$. In particular, $L(u)=\{x\}$ whenver $x\in L(u)$. Similarly, if
$S_x=\{x,y'\}$.  Hence, we can reduce the problem to $H-x$ and we are done.

Now, suppose that $H$ contains a strong loop $x$ with non-symmetric edges
between $x$ and its neighbours $y,y'$. Clearly, we have $S_x=\{x\}$. Hence, we
can reduce the problem to $H-x$ and we are done.

Finally, suppose that $H$ contains a strong edge $xy$ whose one endpoint has no
loop. By symmetry suppose that $x$ has no loop. Let $z$ be the other neighbour
of $x$ and let $w$ be the other neighbour of $y$. Consider the representative
$S_{xy}$.  We have that either $S_{xy}=\{x,y,w\}$ or $S_{xy}=\{z,x,y\}$. By
arc-consistency of $L$, no list $L(u)$ contains both $y$ and $w$ since $xy$ is
strong, but $xw$ is not an edge.  Similarly, no list $L(u)$ contains both $x$
and $y$, or both $x$ and $z$.  Hence, if $S_{xy}=\{z,x,y\}$, we have
$L(u)=\{x\}$ whenever $x\in L(u)$, and if $S_{xy}=\{x,y,w\}$, we have
$L(u)=\{y\}$ whenever $y\in L(u)$. Therefore, we can reduce the problem to $H-x$
or to $H-y$, and we are done.
\end{proof}

\begin{figure}[t!]
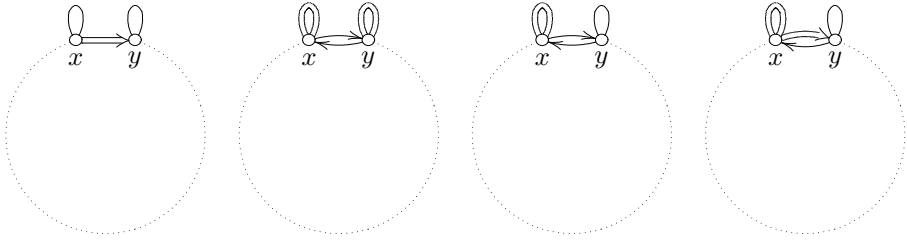

\centering
$
\xy/r3pc/:
(1.5,0);(0,0);{\xypolygon10"a"{~>{}{\phantom{s}}}};
"a3";"a4";{\ellipse :a(-55),^{.}};
{\ar@{=>} "a4";"a3"};
"a3"*[o][F]{\phantom{s}};
"a4"*[o][F]{\phantom{s}};
{\ar@{-}@/^0.1pc/ "a4";"a4"+(-0.07,0.3)};
{\ar@{-}@/^0.2pc/ "a4"+(-0.07,0.3);"a4"+(0.07,0.3)};
{\ar@{-}@/_0.1pc/ "a4";"a4"+(0.07,0.3)};
{\ar@{-}@/^0.1pc/ "a3";"a3"+(-0.07,0.3)};
{\ar@{-}@/^0.2pc/ "a3"+(-0.07,0.3);"a3"+(0.07,0.3)};
{\ar@{-}@/_0.1pc/ "a3";"a3"+(0.07,0.3)};
"a4"+(0,-0.2)*{x};
"a3"+(0,-0.2)*{y};
\endxy
\quad
\xy/r3pc/:
(1.5,0);(0,0);{\xypolygon10"a"{~>{}{\phantom{s}}}};
"a3";"a4";{\ellipse :a(-55),^{.}};
{\ar@/^0.15pc/ "a3";"a4"};
{\ar@/^0.15pc/ "a4";"a3"};
"a3"*[o][F]{\phantom{s}};
"a4"*[o][F]{\phantom{s}};
{\ar@{=}@/^0.1pc/ "a4";"a4"+(-0.07,0.3)};
{\ar@{=}@/^0.2pc/ "a4"+(-0.07,0.3);"a4"+(0.07,0.3)};
{\ar@{=}@/_0.1pc/ "a4";"a4"+(0.07,0.3)};
{\ar@{=}@/^0.1pc/ "a3";"a3"+(-0.07,0.3)};
{\ar@{=}@/^0.2pc/ "a3"+(-0.07,0.3);"a3"+(0.07,0.3)};
{\ar@{=}@/_0.1pc/ "a3";"a3"+(0.07,0.3)};
"a4"+(0,-0.2)*{x};
"a3"+(0,-0.2)*{y};
\endxy
\quad
\xy/r3pc/:
(1.5,0);(0,0);{\xypolygon10"a"{~>{}{\phantom{s}}}};
"a3";"a4";{\ellipse :a(-55),^{.}};
{\ar@/^0.15pc/ "a3";"a4"};
{\ar@/^0.15pc/ "a4";"a3"};
"a3"*[o][F]{\phantom{s}};
"a4"*[o][F]{\phantom{s}};
{\ar@{=}@/^0.1pc/ "a4";"a4"+(-0.07,0.3)};
{\ar@{=}@/^0.2pc/ "a4"+(-0.07,0.3);"a4"+(0.07,0.3)};
{\ar@{=}@/_0.1pc/ "a4";"a4"+(0.07,0.3)};
{\ar@{-}@/^0.1pc/ "a3";"a3"+(-0.07,0.3)};
{\ar@{-}@/^0.2pc/ "a3"+(-0.07,0.3);"a3"+(0.07,0.3)};
{\ar@{-}@/_0.1pc/ "a3";"a3"+(0.07,0.3)};
"a4"+(0,-0.2)*{x};
"a3"+(0,-0.2)*{y};
\endxy
\quad
\xy/r3pc/:
(1.5,0);(0,0);{\xypolygon10"a"{~>{}{\phantom{s}}}};
"a3";"a4";{\ellipse :a(-55),^{.}};
{\ar@{->}@/^0.2pc/ "a3";"a4"};
{\ar@{=>}@/^0.2pc/ "a4";"a3"};
"a3"*[o][F]{\phantom{s}};
"a4"*[o][F]{\phantom{s}};
{\ar@{=}@/^0.1pc/ "a4";"a4"+(-0.07,0.3)};
{\ar@{=}@/^0.2pc/ "a4"+(-0.07,0.3);"a4"+(0.07,0.3)};
{\ar@{=}@/_0.1pc/ "a4";"a4"+(0.07,0.3)};
{\ar@{-}@/^0.1pc/ "a3";"a3"+(-0.07,0.3)};
{\ar@{-}@/^0.2pc/ "a3"+(-0.07,0.3);"a3"+(0.07,0.3)};
{\ar@{-}@/_0.1pc/ "a3";"a3"+(0.07,0.3)};
"a4"+(0,-0.2)*{x};
"a3"+(0,-0.2)*{y};
\endxy
$
\caption{Unresolved trigraph cycles.\label{fig:cycle1}}
\end{figure}

\subsection{Surjective list homomorphism}~\vspace{-1.5ex}

Finally, we describe how we can use our results to classify the complexity of
finding surjective list homomorphisms for some trigraphs.

We say that a homomorphism $f$ of $G$ to $H$ is a {\em a (vertex) surjective
homomorphism} if $f$ is a surjective mapping of $V(G)$ onto $V(H)$.

The {\em surjective list $H$-colouring problem} SL-HOM$(H)$ takes as input 
a digraph $G$ with lists $L$, and asks whether or not $G$ admits a surjective
list homomorphism to $H$ with respect to $L$.

Let $H^{--}$ be the digraph obtained from a trigraph $H$ by removing all
vertices $x$ with a strong loop at $x$ or a strong edge $xy$ or $yx$ for
some $y$.

\begin{theorem}\label{thm:5}
If $H$ is a special tree-like trigraph, i.e., if $H\in\mathcal S$, then
\mbox{SL-HOM$(H)$} is polynomially equivalent to SL-HOM$(H^{--})$. In
particular, SL-HOM$(H)$ is polynomial time solvable or $NP$-complete.
\end{theorem}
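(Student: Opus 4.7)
The plan is to prove the polynomial equivalence by two reductions, mirroring Theorem~\ref{thm:4}; the dichotomy conclusion then follows because $H^{--}$ is a digraph, for which L-HOM obeys Bulatov's dichotomy and is polynomially equivalent to SL-HOM via enumeration of the at most $n^{|V(H^{--})|}$ preimage choices.

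For the forward reduction SL-HOM$(H^{--})\le_P$ SL-HOM$(H)$, given $(G,L)$ with lists in $V(H^{--})$, I would construct $(G',L')$ by adjoining, for each $x\in X:=V(H)\setminus V(H^{--})$, one new vertex $v_x$ with $L'(v_x)=\{x\}$, placing the arc $v_xv_y$ in $E(G')$ exactly when $xy\in W(H)\cup S(H)$, and adding no arcs between $V(G)$ and the new vertices. Then $f$ is a surjective list $H^{--}$-colouring of $G$ iff its extension $f'$ by $f'(v_x)=x$ is a surjective list $H$-colouring of $G'$: the new block is a faithful copy of $H[X]$, and for a non-arc $v_xu$ with $u\in V(G)$ the conditions $xf(u)\notin S(H)$ and $f(u)x\notin S(H)$ are automatic, because any strong neighbour of $x\in X$ is itself incident to a strong edge and hence lies in $X$, not in $V(H^{--})$.

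For the backward reduction SL-HOM$(H)\le_P$ SL-HOM$(H^{--})$, given $(G,L)$, I would first enumerate, for each $x\in X$, a preimage $w_x\in V(G)$ with $x\in L(w_x)$ and set $L(w_x)=\{x\}$. With $|X|$ constant this gives polynomially many branches, and the original is a yes-instance iff one branch admits a list $H$-colouring of $G$ surjective onto $V(H^{--})$. I would then replay the proof of Theorem~\ref{thm:4}: by definition of $\mathcal{S}$ the set $F'\supseteq F(H)$ satisfies the matching property, and the singleton lists on the $w_x$'s (propagating via arc-consistency to constrain the lists of their neighbours) make $G$ separable on $F'$ in the sense of Lemma~\ref{lem:9}, decomposing the problem across components of $H\setminus F'$; inside each component, the domination property together with Lemma~\ref{lem:8} eliminates the remaining strong loops one at a time. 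What remains is a list-homomorphism question into the digraph $H^{--}$ with surjectivity onto $V(H^{--})$ still required---precisely an instance of SL-HOM$(H^{--})$.

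The main obstacle is to verify that the reductions and transformations used in Theorem~\ref{thm:4}, justified there only for the existence of a list homomorphism, stay sound when combined with surjectivity onto $V(H^{--})$. Each of Lemmata~\ref{lem:1}--\ref{lem:9} either only shrinks lists (arc-, separator-, sparse-dense-consistency, and weak-dominance reductions), in which case any $y\in V(H^{--})$ either survives in some list---and the SL-HOM$(H^{--})$ oracle can still choose a preimage for it---or is eliminated everywhere, forcing rejection of the branch; or else it branches into polynomially many reductions, which we simply enumerate. The inclusion $F'\supseteq F(H)$ is crucial because it ensures every strong edge and every weak edge between strong loops is peeled off during the decomposition, so that the residual instance lives entirely on $V(H^{--})$, as required.
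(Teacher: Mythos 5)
Your forward reduction is sound and is essentially the paper's construction (the paper attaches a disjoint copy of the whole associated digraph of $H$ with singleton lists rather than just $H[X]$, but the idea is the same). The genuine gap is in the backward reduction, at exactly the point you flag as ``the main obstacle'' and then wave away. After fixing preimages only for $X=V(H)\setminus V(H^{--})$, you must carry the constraint ``surjective onto $V(H^{--})$'' through the entire Theorem~\ref{thm:4} machinery, and your soundness argument --- that any $y\in V(H^{--})$ either survives on some list or is eliminated everywhere --- does not establish this. The list-reduction lemmata guarantee only that \emph{some} list homomorphism survives, not that a \emph{surjective} one does. Lemma~\ref{lem:7} (weak domination, invoked inside Lemma~\ref{lem:8}, which is central to the strong-loop case) is the clearest failure: it deletes the dominated vertex $x$ from every list containing $y$ and justifies this by \emph{modifying} a given colouring $f$ to $f'$ by re-mapping preimages of $x$ to $y$; since the dominated vertices there are loopless or weakly looped neighbours of a strong loop, they typically lie in $V(H^{--})$, so $f'$ can fail to be surjective onto $V(H^{--})$ even when $f$ was. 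A vertex ``surviving on some list'' does not mean the reduced instance still admits a colouring that actually uses it. The same problem recurs with Lemma~\ref{lem:9}, whose $2$SAT resolution constructs a colouring with no control over which vertices of $V(H^{--})$ are hit, and with the branch of Theorem~\ref{thm:4} where Lemma~\ref{lem:9} \emph{solves} the instance outright rather than reducing it.

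The paper avoids all of this by discharging surjectivity at the very first step: it guesses a preimage $v_x$ with $L(v_x)=\{x\}$ for \emph{every} $x\in V(H)$ (polynomially many branches), which converts SL-HOM$(H)$ into plain L-HOM$(H)$ instances in which every vertex of $H$ occurs on a list. It then runs the Theorem~\ref{thm:4} reduction purely in the non-surjective setting, landing in L-HOM$(H-U)$ with $(H-U)^-=H^{--}$, hence in L-HOM$(H^{--})$, and only at the end converts back to SL-HOM$(H^{--})$ via the disjoint-copy gadget establishing L-HOM$(H^{--})\equiv$ SL-HOM$(H^{--})$. Your proof is repaired by adopting exactly this ordering: fix preimages for all of $V(H)$, not just for $X$, and route through L-HOM$(H^{--})$ rather than trying to keep surjectivity alive through the reductions. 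Note also that your closing remark that L-HOM and SL-HOM of a digraph are equivalent ``via enumeration of preimage choices'' gives only the reduction SL-HOM$\le$ L-HOM; the converse direction, needed for the $NP$-hardness half of the dichotomy, requires the disjoint-copy gadget as well.
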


\begin{proof}
Let $H$ be any trigraph (not necessarily a special tree-like).  First, we
observe that SL-HOM$(H)$ is polynomially reducible to L-HOM$(H)$, since
SL-HOM$(H)$ is a special case of L-HOM$(H)$.

Now, suppose that $H$ contains a digraph $H_0$ as an induced subgraph.  Let
$G_0$ with lists $L_0$ be an instance of L-HOM$(H_0)$.  Let $H'$ be the digraph
associated with $H$, and let $G$ be the disjoint union of $G_0$ and $H'$. Define
$L(x)=\{x\}$ for $x\in V(H')$, and  $L(x)=L_0(x)$ for $x\in V(G_0)$.  It now
follows that $G_0$ admits a list $H_0$-colouring with respect to $L_0$ if and
only if $G$ admits a surjective list $H$-colouring with respect to $L$.

This yields that L-HOM$(H^{--})$ is polynomially equivalent to SL-HOM$(H^{--})$.
In fact, we can also conclude that SL-HOM$(H^{--})$ is polynomially reducible to
SL-HOM$(H)$.

Now, let $H$ be a special tree-like digraph. Let $F'\supseteq F(H)$ be the set
of edges from the definition of $H$. Let $G$ with lists $L$ be an instance of
SL-HOM$(H)$. We can assume that each vertex of $H$ appears on some list, since
otherwise there is no solution.  Now, following the proof of
Theorem~\ref{thm:4}, we conclude (depending on $L$) that there is a set
$F''\subseteq F'$ such that the instance $G,L$ of L-HOM$(H)$ is polynomially
reducible to an instance of L-HOM$(H\setminus F''$). In fact, the proof implies
the instance is reducible to an instance of L-HOM$(H-U)$ where $U$ are the
vertices incident to the edges of $F''$. Moreover, since each vertex of $H$
appears on some list, the edges of $F'\setminus F''$ are only between strong
loops. Hence, $U$ contains all vertices of $H$ incident to strong edges.
Therefore, $(H-U)^-=H^{--}$.  Also, by the definition of $H$, for each strong
loop $x$ of $H-U$, each connected component of $H-U-x$ satisfies the domination
property. Hence, by Theorem \ref{thm:4}, \mbox{L-HOM}$(H-U)$ is polynomially
equivalent to \mbox{L-HOM}$(H^{--})$. Also, as remarked earlier,
\mbox{SL-HOM}$(H^{--})$ is polynomially equivalent to L-HOM$(H^{--})$.  Hence,
\mbox{SL-HOM}$(H)$ is polynomially reducible to SL-HOM$(H^{--})$.

That concludes the proof.
\end{proof}

The proof of this theorem in particular implies the following.
\begin{corollary}
If $H$ is a digraph, then L-HOM$(H)$ and SL-HOM$(H)$ are polynomially equivalent.
\end{corollary}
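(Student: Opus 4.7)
The plan is to derive this corollary as an immediate specialization of (the proof of) Theorem \ref{thm:5}. If $H$ is a digraph, then $H$ has no strong loops and no strong edges, so the operation $H^{--}$, which removes every vertex incident to a strong loop or a strong edge, removes nothing, giving $H^{--} = H$. The proof of Theorem \ref{thm:5} already records the fact that L-HOM$(H^{--})$ and SL-HOM$(H^{--})$ are polynomially equivalent for any trigraph $H$. Substituting $H^{--} = H$ yields the corollary at once. For clarity I would nevertheless spell out the two reductions directly, since both are very short.

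For the forward reduction SL-HOM$(H) \to$ L-HOM$(H)$: given an instance $(G,L)$ of SL-HOM$(H)$, I enumerate all maps $\phi : V(H) \to V(G)$ compatible with $L$ (i.e.\ $x \in L(\phi(x))$ for every $x$), of which there are at most $n^{|V(H)|}$, a polynomial in $n$ since $|V(H)|$ is a fixed constant. For each such $\phi$ I form lists $L_\phi$ by setting $L_\phi(\phi(x)) = \{x\}$ for every $x \in V(H)$ and $L_\phi(v) = L(v)$ otherwise, and solve L-HOM$(H)$ on $(G, L_\phi)$. A surjective list homomorphism of $(G,L)$ exists iff at least one of these L-HOM instances is satisfiable: any surjective list homomorphism $f$ is witnessed by some $\phi$ with $\phi(x) \in f^{-1}(x)$, and conversely any list homomorphism respecting $L_\phi$ has image covering $V(H)$ by construction.

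For the reverse reduction L-HOM$(H) \to$ SL-HOM$(H)$: given $(G_0, L_0)$, form $G$ as the disjoint union of $G_0$ with a copy of $H$, and lists $L$ defined by $L(x) = \{x\}$ for each $x$ in the copy of $H$ and $L(v) = L_0(v)$ otherwise. The identity map on the copy of $H$ is a valid homomorphism, and because $H$ is a digraph there are no strong-edge constraints imposed on the non-edges spanning the two components of $G$. Hence any list homomorphism of $(G_0, L_0)$ extends by the identity to a surjective list homomorphism of $(G, L)$, and conversely any surjective list homomorphism of $(G, L)$ restricts to a list homomorphism of $(G_0, L_0)$. This final step is precisely where the hypothesis that $H$ is a digraph intervenes: for general trigraphs the strong edges of $H$ would introduce cross-component non-edge constraints, and this is the reason Theorem \ref{thm:5} in general only reduces to SL-HOM$(H^{--})$ rather than to SL-HOM$(H)$. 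Consequently there is no real obstacle; the main content of the corollary is already contained in the preparatory paragraphs of the proof of Theorem \ref{thm:5}, specialized to $H^{--} = H$.
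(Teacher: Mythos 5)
Your proof is correct and follows essentially the same route as the paper: the reverse reduction (disjoint union of $G_0$ with a copy of $H$ carrying singleton lists) is exactly the construction in the proof of Theorem~\ref{thm:5} specialized to $H_0=H^{--}=H$, and the forward reduction by enumerating representatives is the explicit form of the paper's (terser) observation that SL-HOM$(H)$ reduces to L-HOM$(H)$.
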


In fact, similarly, one can prove a stronger statement.
\begin{proposition}
If for no vertices $x,y,z$ of $H$ we have $xy\in S(H)$ and $xz\not\in W(H)\cup
S(H)$, then L-HOM$(H)$ and SL-HOM$(H)$ are polynomially equivalent.
\end{proposition}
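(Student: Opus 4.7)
My plan is to establish the polynomial equivalence by giving reductions in both directions, mirroring the proof of Theorem \ref{thm:5} and its Corollary but using the hypothesis to absorb the strong edges of $H$.

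For the easy direction SL-HOM$(H) \le_P$ L-HOM$(H)$, I would iterate over all at most $|V(G)|^{|V(H)|}$ assignments $\sigma: V(H) \to V(G)$ with $x\in L(\sigma(x))$; since $|V(H)|$ is a constant, this enumeration is polynomial. For each $\sigma$, I reduce $L$ by setting $L(\sigma(x)) := \{x\}$ for every $x \in V(H)$ and query the L-HOM oracle. The original instance admits a surjective list $H$-colouring iff at least one of these queries returns YES.

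For the harder direction L-HOM$(H) \le_P$ SL-HOM$(H)$, I would follow the disjoint-union construction used in the proof of the Corollary. Given an L-HOM instance $(G_0, L_0)$, I build $G$ on vertex set $V(G_0) \cup V(H')$ with lists $L(v_x) = \{x\}$ for $x \in V(H')$ and $L(u) = L_0(u)$ for $u \in V(G_0)$; the edges of $G$ are the edges of $G_0$, the edges of $H'$, together with an additional arc $v_x \to u$ for every $x \in X := \{z \in V(H) : \exists w,\, zw \in S(H)\}$ and every $u \in V(G_0)$. The hypothesis guarantees $xf(u) \in W(H) \cup S(H)$ for every $x \in X$ and every $f(u) \in V(H)$, so these extra arcs are valid regardless of $f(u)$ and simultaneously dispose of the constraint $xf(u) \notin S(H)$ that would otherwise arise from the $v_x \to u$ non-edges when $x \in X$. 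The remaining non-edge constraints between $V(G_0)$ and the copy are: (i) $xf(u) \notin S(H)$ for $x \notin X$, which holds automatically because $x$ has no strong out-edges; and (ii) $f(u)x \notin S(H)$ for every $x \in V(H)$, which is the crux.

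The main obstacle is constraint (ii). Under the hypothesis, $Y_x := \{y : yx \in S(H)\}$ is contained in $X$, so a violation can only occur when $f(u) \in X$. I plan to resolve this by a case split on whether each $u \in V(G_0)$ is mapped into $X$ or into $V(H) \setminus X$: when $L_0(u) \subseteq X$, the hypothesis again lets me safely add the arc $u \to v_x$, since $f(u) \in X$ has an edge to every vertex of $H$; when $L_0(u) \cap X = \emptyset$, the non-edge constraint already holds automatically because $f(u)$ has no strong out-edges. To keep the number of produced instances polynomial when $L_0(u)$ meets $X$ properly, I would invoke the sparse-dense partition-counting argument in the spirit of Lemma \ref{lem:6}, enumerating candidate partitions of $V(G_0)$ into the two regimes directly from the structure of $G_0$; by the bound of \cite{fhkm} the number of partitions is $n^{O(|X|\cdot|V(H)\setminus X|)}$, which is polynomial since $H$ is fixed. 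Each partition yields an SL-HOM$(H)$ query, and the L-HOM instance is YES iff some query is. The subtle point is that, unlike Lemma \ref{lem:6}, we cannot appeal to polynomial-time recognition of $H[X]$-colourings; this is sidestepped because each candidate partition needs only to be produced, not verified, before being handed to the SL-HOM oracle.
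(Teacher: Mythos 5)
Your overall architecture is sound and matches what the paper intends (the paper itself only gestures at the proof via ``similarly'' to the digraph Corollary): the easy direction by enumerating representatives, and the hard direction by the disjoint-union-plus-cross-arcs construction. You also correctly isolate the genuine difficulty: the out-going constraints from the copy are neutralized by the hypothesis, and the only problem is the constraint $f(u)x\not\in S(H)$ arising from a non-arc $u\to v_x$ when $L_0(u)$ meets both $X=\{z:\exists w,\ zw\in S(H)\}$ and its complement.

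The gap is in how you dispose of that mixed case. Lemma \ref{lem:6} and the partition-counting result of \cite{fhkm} apply only when one side of the partition consists of \emph{strong loops} and the other of \emph{loopless} vertices; that is precisely what makes $\mathcal S\cap\mathcal D$ finite (a Ramsey-type argument) and hence bounds the number of $(\mathcal S,\mathcal D)$-partitions by $n^{2c}$. Your $X$ need not consist of strong loops: a vertex with a weak loop and a single strong out-edge lies in $X$ (indeed the hypothesis forces every vertex of $X$ to have \emph{some} loop, but it may be weak), and $V(H)\setminus X$ may also contain vertices with weak loops. In that situation every digraph admits both an $H[X]$-colouring and an $H[V(H)\setminus X]$-colouring, $c(\mathcal S,\mathcal D)$ is unbounded, and the number of partitions of $V(G_0)$ into the two regimes that are consistent with \emph{some} homomorphism can be exponential. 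Your remark that the partitions ``need only be produced, not verified'' does not rescue this: the obstacle is not recognition but the sheer number of candidate partitions you would have to hand to the oracle, so the claimed bound $n^{O(|X|\cdot|V(H)\setminus X|)}$ is unjustified and the reduction is not polynomial as written. To repair the argument you need a different mechanism for splitting the mixed lists --- for instance a preliminary list reduction (in the spirit of Lemmata \ref{lem:1}, \ref{lem:2} and \ref{lem:7}) that guarantees every list lies entirely inside $X$ or entirely outside it, or an argument that the stated hypothesis already forbids a list from containing both some $w$ with $wx\in S(H)$ and some $w'$ with $w'x\not\in W(H)\cup S(H)$; as it stands, neither is established.
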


Note that if we have vertices $x,y,z$ as described above, then an instance $G$
in which $x$ occurs on some list allows us to use arc-consistency to make sure
that no vertex has both $y$ and $z$ on its list; while if $x$ appears on no
list, this no longer happens.

\begin{figure}
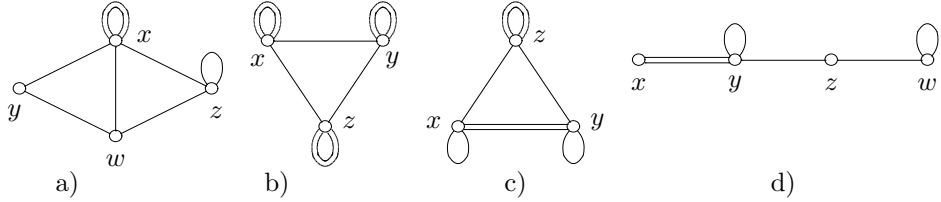

$$
\xy/r3pc/:
(0,1)*[o][F]{\phantom{s}}="x";
(-1,0.5)*[o][F]{\phantom{s}}="y";
(1,0.5)*[o][F]{\phantom{s}}="z";
(0,0)*[o][F]{\phantom{s}}="w";
{\ar@{-} "x";"y"};
{\ar@{-} "x";"z"};
{\ar@{-} "x";"w"};
{\ar@{-} "y";"w"};
{\ar@{-} "z";"w"};
{\ar@{=}@/^0.15pc/ "x";"x"+(-0.1,0.3)};
{\ar@{=}@/^0.25pc/ "x"+(-0.1,0.3);"x"+(0.1,0.3)};
{\ar@{=}@/_0.15pc/ "x";"x"+(0.1,0.3)};
{\ar@{-}@/^0.15pc/ "z";"z"+(-0.1,0.3)};
{\ar@{-}@/^0.25pc/ "z"+(-0.1,0.3);"z"+(0.1,0.3)};
{\ar@{-}@/_0.15pc/ "z";"z"+(0.1,0.3)};
"x"+(0.3,0.05)*{x};
"y"+(-0.05,-0.25)*{y};
"z"+(0.05,-0.25)*{z};
"w"+(0,-0.25)*{w};
(-0.5,-0.5)*{\rm a)};
\endxy
\quad
\xy/r3pc/: (0,0)*{};
(-0.6,1)*[o][F]{\phantom{s}}="x";
(0.6,1)*[o][F]{\phantom{s}}="y";
(0,0.1)*[o][F]{\phantom{s}}="z";
{\ar@{-} "x";"y"};
{\ar@{-} "y";"z"};
{\ar@{-} "x";"z"};
"x"+(-0.1,-0.2)*{x};
"y"+(0.1,-0.2)*{y};
"z"+(0.25,0.05)*{z};
{\ar@{=}@/^0.15pc/ "x";"x"+(-0.1,0.3)};
{\ar@{=}@/^0.25pc/ "x"+(-0.1,0.3);"x"+(0.1,0.3)};
{\ar@{=}@/_0.15pc/ "x";"x"+(0.1,0.3)};
{\ar@{=}@/^0.15pc/ "y";"y"+(-0.1,0.3)};
{\ar@{=}@/^0.25pc/ "y"+(-0.1,0.3);"y"+(0.1,0.3)};
{\ar@{=}@/_0.15pc/ "y";"y"+(0.1,0.3)};
{\ar@{=}@/_0.15pc/ "z";"z"+(-0.1,-0.3)};
{\ar@{=}@/_0.25pc/ "z"+(-0.1,-0.3);"z"+(0.1,-0.3)};
{\ar@{=}@/^0.15pc/ "z";"z"+(0.1,-0.3)};
(-0.5,-0.5)*{\rm b)};
\endxy
~~~
\xy/r3pc/: (0,0)*{};
(-0.6,0.1)*[o][F]{\phantom{s}}="x";
(0.6,0.1)*[o][F]{\phantom{s}}="y";
(0,1)*[o][F]{\phantom{s}}="z";
{\ar@{=} "x";"y"};
{\ar@{-} "y";"z"};
{\ar@{-} "x";"z"};
"x"+(-0.25,0.05)*{x};
"y"+(0.25,0.05)*{y};
"z"+(0.25,0.05)*{z};
{\ar@{-}@/_0.15pc/ "x";"x"+(-0.1,-0.3)};
{\ar@{-}@/_0.25pc/ "x"+(-0.1,-0.3);"x"+(0.1,-0.3)};
{\ar@{-}@/^0.15pc/ "x";"x"+(0.1,-0.3)};
{\ar@{-}@/_0.15pc/ "y";"y"+(-0.1,-0.3)};
{\ar@{-}@/_0.25pc/ "y"+(-0.1,-0.3);"y"+(0.1,-0.3)};
{\ar@{-}@/^0.15pc/ "y";"y"+(0.1,-0.3)};
{\ar@{=}@/^0.15pc/ "z";"z"+(-0.1,0.3)};
{\ar@{=}@/^0.25pc/ "z"+(-0.1,0.3);"z"+(0.1,0.3)};
{\ar@{=}@/_0.15pc/ "z";"z"+(0.1,0.3)};
(0,-0.5)*{\rm c)};
\endxy
\quad
\xy/r3pc/: (0,0)*{};
(-1,0.8)*[o][F]{\phantom{s}}="x";
(0,0.8)*[o][F]{\phantom{s}}="y";
(1,0.8)*[o][F]{\phantom{s}}="z";
(2,0.8)*[o][F]{\phantom{s}}="w";
{\ar@{=} "x";"y"};
{\ar@{-} "y";"z"};
{\ar@{-} "w";"z"};
"x"+(0,-0.25)*{x};
"y"+(0,-0.25)*{y};
"z"+(0,-0.25)*{z};
"w"+(0,-0.25)*{w};
{\ar@{-}@/^0.15pc/ "w";"w"+(-0.1,0.3)};
{\ar@{-}@/^0.25pc/ "w"+(-0.1,0.3);"w"+(0.1,0.3)};
{\ar@{-}@/_0.15pc/ "w";"w"+(0.1,0.3)};
{\ar@{-}@/^0.15pc/ "y";"y"+(-0.1,0.3)};
{\ar@{-}@/^0.25pc/ "y"+(-0.1,0.3);"y"+(0.1,0.3)};
{\ar@{-}@/_0.15pc/ "y";"y"+(0.1,0.3)};
(0.5,-0.5)*{\rm d)};
\endxy
$$
\caption{a) the stubborn problem, b) the complement of the 3-colouring problem,
c)~the~complement of the stable cutset problem, d) a trigraph $H$ with
$NP$-complete L-HOM$(H)$ but polynomial time solvable SL-HOM$(H)$.
\label{fig:ex1}}
\end{figure}

\section{Conclusions}

We investigated the list (and surjective list) homomorphism problems for
trigraphs~$H$. When $H$ is a digraph, we know that each such problem is
polynomial time solvable or $NP$-complete \cite{bulatov}. However, there are
small trigraphs $H$ for which such dichotomy is not known, for instance, the
trigraph in Figure \ref{fig:ex1}a (corresponding to the so-called "stubborn"
problem from \cite{cehm}).

Hence, we have tried to identify properties of trigraphs $H$ which allow us to
prove dichotomy. With this in mind, we have defined the class $\mathcal T$ of
tree-like trigraphs, and we proved that these trigraphs enjoy dichotomy.
Slightly easier to define is the class $\mathcal S$ of special tree-like
trigraphs, included in $\mathcal T$. In particular, the class of trigraphs whose
underlying graph is a tree, is included in $\mathcal S$ (and hence in $\mathcal
T$).

We now offer some tangential evidence that our class $\mathcal T$ carves out a
reasonable portion of trigraphs where L-HOM($H$) is polynomially equivalent to
L-HOM($H^-$).  For instance, one way to violate the matching property is by
having a vertex with a strong loop adjacent to two other vertices with strong
loops joined by a symmetric edge. The trigraph in Figure \ref{fig:ex1}b
illustrates this possibility, and also illustrates that a polynomial time
solvable (in this case trivial) problem L-HOM($H^-$) can arise from an
$NP$-complete problem L-HOM($H$). (In this case, $G$ admits a homomorphism to
$H$ if and only if the complement of $G$ is 3-colourable.) The trigraph in
Figure \ref{fig:ex1}c illustrates another way to have this take place. Here, the
two other vertices have weak loops and are joined by a strong symmetric edge. In
this case L-HOM($H$) is also NP-complete, since it corresponds (in the
complement) to the stable cutset problem \cite{sula}.  Thus we again have an
easy \mbox{L-HOM($H^-$)} with a hard L-HOM($H$).

Of course, it is possible that some class of trigraphs enjoys dichotomy for
reasons different from polynomial equivalence of L-HOM($H$) and L-HOM($H^-$).
However, the stubborn problem (Figure \ref{fig:ex1}a) illustrates the fact that
there are trigraphs $H \not\in {\mathcal T}$ where even the dichotomy is not
clear. We note that the trigraph $H$ for the stubborn problem (the trigraph in
Figure \ref{fig:ex1}a) violates the domination property, since the statement
(D2) does not hold for the strong loop at $x$.

Finally, we have also introduced the surjective list homomorphism problem
SL-HOM($H$) as an interesting variant of L-HOM($H$). We were able to completely
classify the complexity of SL-HOM($H$) for trigraphs $H$ in the class $\mathcal
S$ by proving polynomial equivalence with SL-HOM($H^{--})$.  This result
implies, in particular, that the two problems, L-HOM($H$) and SL-HOM($H$), may
not necessarily have the same complexity for all trigraphs $H$.  This difference
was not noted before, since the two problems are polynomially equivalent for all
digraphs $H$. In fact, we have given a general condition for trigraphs $H$ under
which the two problems SL-HOM($H$) and L-HOM($H$) are polynomially equivalent.
Nonetheless, the trigraph $H$ in Figure~\ref{fig:ex1}d illustrates a case where
SL-HOM($H$) is polynomial time solvable, because using representatives and
arc-consistency reduces all lists to size at most two, while L-HOM($H$) is
$NP$-complete, because $H-x$ corresponds to the stable cutset problem
\cite{sula}.

\end{document}